\documentclass[a4paper,11pt]{article}

\usepackage[T1]{fontenc}
\usepackage[utf8]{inputenc}
\usepackage{lmodern}
\usepackage{amsmath,amssymb,amsthm}
\usepackage[margin=2.5cm]{geometry}
\usepackage{xurl}
\usepackage{float}
\usepackage{placeins}
\usepackage{hyperref}
\usepackage{cleveref}
\usepackage{booktabs}
\usepackage{array}
\usepackage{tikz}
\usetikzlibrary{arrows.meta,positioning}
\usepackage{xspace}
\usepackage{upquote}
\hypersetup{
  hidelinks,
  pdftitle={Certified Qualitative Analysis of the SIR ODE and Reusable Scalar Lemmas in Isabelle/HOL},
  pdfauthor={David B. Hulak, Arthur F. Ramos, Ruy J. G. B. de Queiroz}
}

\newtheorem{theorem}{Theorem}[section]
\newtheorem{lemma}[theorem]{Lemma}
\newtheorem{corollary}[theorem]{Corollary}
\newtheorem{definition}[theorem]{Definition}

\newcommand{\isa}[1]{\textsf{#1}}
\newcommand{\isatt}[1]{\texttt{#1}}
\newcommand{\Isabelle}{Isabelle/HOL\xspace}

\newcommand{\doi}[1]{\href{https://doi.org/#1}{doi:\nolinkurl{#1}}}
\newcommand{\lastcheckeddate}{3 May 2026}

\newcommand{\R}{\mathbb{R}}
\newcommand{\deriv}[2]{\frac{d#1}{d#2}}
\newcommand{\Rrecinit}{R_{\mathrm{rec},0}}
\newcommand{\Rinit}{\mathcal{R}_{\mathrm{init}}}
\newcommand{\Reff}{\mathcal{R}_{\mathrm{eff}}}
\newcommand{\Rclassical}{\mathcal{R}_{0}^{\mathrm{DFE}}}

\title{Certified Qualitative Analysis of the SIR ODE\\
and Reusable Scalar Lemmas in Isabelle/HOL}
\author{
David B. Hulak\\
\small Independent Researcher\\
\small \href{mailto:dbhulak@gmail.com}{dbhulak@gmail.com}\\
\small ORCID: \href{https://orcid.org/0009-0002-8056-1774}{0009-0002-8056-1774}
\and
Arthur F. Ramos\\
\small Microsoft\\
\small \href{mailto:arfreita@microsoft.com}{arfreita@microsoft.com}\\
\small ORCID: \href{https://orcid.org/0009-0003-3568-0325}{0009-0003-3568-0325}
\and
Ruy J. G. B. de Queiroz\\
\small Centro de Inform{\'a}tica, Universidade Federal de Pernambuco\\
\small \href{mailto:ruy@cin.ufpe.br}{ruy@cin.ufpe.br}\\
\small ORCID: \href{https://orcid.org/0000-0003-1482-0977}{0000-0003-1482-0977}
}
\date{}

\begin{document}

\maketitle

\begin{abstract}
We present a mechanically checked Isabelle/HOL bridge from the
Picard--Lindel\"of flow infrastructure in the Archive of Formal Proofs (AFP) to
selected standard qualitative facts for the mass-action, closed-population
classical susceptible--infectious--recovered (SIR) epidemic ODE.  The
epidemiological facts are classical; the contribution is reusable theorem
infrastructure connecting the AFP local-flow
construction to global forward existence, forward invariance of the nonnegative
orthant (all three compartments nonnegative), conservation, monotonicity, the
Kermack--McKendrick conserved phase-plane relation, compartment bounds, and
threshold-ratio conditions governing pointwise infectious growth and
interval-wide monotonicity.

The finite-interval qualitative facts are proved first on bounded intervals.
After the compact-continuation argument establishes global forward existence,
those facts are transferred to the unique AFP forward flow on arbitrary finite
intervals $[0,b]$ with $b>0$.  We do not
formalize stability, final-size, or asymptotic theory.  For the forward
initial-value problem in this unnormalized mass-action convention,
$\beta S(0)/\gamma$ is the initial effective threshold ratio governing initial
infectious growth when $I(0)>0$; it need not equal the fully susceptible
disease-free-equilibrium quantity.

The checked proof first establishes sign and conservation facts for the local
AFP flow, then uses the conserved nonnegative simplex---a compact set containing
each local-flow segment whose times are already known to exist---as the
compactness witness for extending to all forward times.  Thus the qualitative
facts, under their stated side conditions, apply to the unique Isabelle/AFP SIR
flow itself, not to an informally assumed trajectory.
This packages selected textbook SIR qualitative reasoning as importable Isabelle
theorem infrastructure.

The reusable layer is an Isabelle library for homogeneous-linear scalar
compartment arguments---equations of the form $X'(t)=f(t)X(t)$---together with
derivative-sign monotonicity, three-compartment conservation, and an SIR transfer
bridge to the AFP flow infrastructure.  The accompanying Isabelle artifact
builds with Isabelle~2024 and AFP~2024, and its checked source theory files
contain no \isa{sorry} proof placeholders and no \isa{oops} proof-abort commands.

\end{abstract}

\noindent\textbf{Keywords:} Isabelle/HOL; formal verification; SIR epidemic model;
compartmental models; ordinary differential equations; forward invariance; qualitative ODE analysis;
proof engineering

\section{Introduction}
\label{sec:introduction}

Compartmental models are a cornerstone of mathematical
epidemiology~\cite{kermack1927sir,anderson1991infectious,diekmann2000mathematical,diekmann2013tools,brauer2019epidemiology,hethcote2000mathematics,murray2002mathematical}.
These models partition a population into distinct epidemiological compartments
whose sizes evolve according to systems of ordinary differential equations
(ODEs).  Their qualitative analysis is non-numerical structural reasoning about
trajectories: conservation laws, monotonicity, threshold conditions, forward
invariance, and phase-plane structure such as the Kermack--McKendrick conserved
level curves.  These facts support understanding epidemic dynamics without
closed-form solutions.  The concrete case is the closed-population
susceptible--infectious--recovered (SIR) system
$S'=-\beta SI$, $I'=\beta SI-\gamma I$, and $R'=\gamma I$, where
$S$, $I$, and $R$ denote susceptible, infectious, and recovered compartments;
the standard notation is made precise in \Cref{sec:sir-model}.
One such qualitative fact is the conserved Kermack--McKendrick level-curve
relation in the $(S,I)$ phase plane; \Cref{sec:phase-plane} gives the precise
finite-interval statement and positivity condition.
Making these qualitative facts machine-checkable matters for downstream
formal epidemiology: for example, a later verified numerical-bound argument or
model-extension proof can import sign, conservation, and threshold side
conditions instead of assuming them informally.

We present a mechanically checked Isabelle/HOL formalization of these qualitative
SIR facts, organized around reusable scalar lemmas for compartmental models.
Here the certification claim is specific: the artifact builds under
Isabelle~2024 and AFP~2024, and each displayed theorem faithfully summarizes a
checked Isabelle statement.
The novelty is formal-methods infrastructure---the checked AFP-flow bridge and
reusable interfaces---rather than new epidemiological mathematics.

At the architectural level, a central mechanization pitfall is circularity:
all-forward-time invariance cannot be a premise for proving all-forward-time
existence.
In particular, one cannot assume that a trajectory remains for all future times
in a compact nonnegative simplex before the continuation argument has established
that those future times exist.
Only solution prefixes whose times already lie in the maximal existence interval
are available for the compact-continuation argument.
\Cref{sec:non-circular-globalization} gives the proof order; sign and
conservation facts on existing local prefixes provide the compact-continuation
witness: a compact set containing the existing local trajectory, used by the AFP
continuation theorem to extend the local flow.

\begin{definition}[Forward invariance]
In this paper, a set $K\subseteq\R^3$ of SIR states is \emph{forward invariant}
for the class of differentiable triples satisfying the SIR ODE (stated above;
formalized in \Cref{sec:sir-model}) on $[a,b]$, when every solution whose state triple
$(S(a),I(a),R(a))$ lies in $K$ at its left endpoint has its state triple
$(S(t),I(t),R(t))$ in $K$ for every $t\in[a,b]$.
\end{definition}
This definition fixes terminology used throughout.
In the finite-interval statements below, ``starts'' means at the left endpoint
$a$; the flow bridge specializes this to initial time~$0$.
For ordinary mathematical reading, this means differentiable SIR trajectories
satisfying the ODE on the interval.
In the main application, $K$ is the nonnegative orthant; the formal theorem is
proved in the \isa{SIR\_solution} locale, a named, parameterized proof context
(\Cref{sec:locales}) whose assumptions include nonnegative initial data.

Despite their ubiquity, the qualitative properties of compartmental models are
often presented at a mathematical level that suppresses mechanization-relevant
hypotheses.  In textbook presentations of the SIR
model~\cite{kermack1927sir,hethcote2000mathematics,brauer2012mathematical,brauer2019epidemiology},
nonnegativity is usually handled as part of the biologically meaningful state
space or by a boundary argument, while existence, uniqueness, and
differentiability hypotheses are delegated to standard ODE theory or left in the
background.

In this development, mechanization forces these assumptions into the open: we
prove nonnegativity from the SIR ODE and nonnegative initial conditions, then
connect the result to a Picard--Lindel\"of construction of the SIR flow.  The
compactness argument may use only invariants already proved on existing
local-flow prefixes.  Because SIR-family models are building blocks for richer
formal epidemiology, even qualitative proof obligations such as nonnegativity,
conservation, and threshold monotonicity are worth making explicit.  For
example, a threshold statement is only biologically meaningful for trajectories
that remain in the nonnegative orthant (the set where all compartment values are
nonnegative).

For readers interested in formalized reasoning, the reusable architecture has
three layers: scalar calculus lemmas, SIR-specific locale instantiation, and AFP
flow globalization.  The proof-engineering difficulty is preserving theorem
side conditions across these layers, not redoing the textbook algebra by hand.
Concretely, the checked bridge must align endpoint derivative conventions,
within-interval and full-neighbourhood derivative facts, theorem-side-condition
traceability, and compact continuation without assuming the global invariants
being proved.
\Cref{sec:using-theorem-package} gives a concrete downstream import scenario.

\paragraph{Contributions.}
\begin{enumerate}
\item A \emph{reusable} \Isabelle lemma library for homogeneous-linear scalar
  compartmental ODE arguments, equations of the form $X'(t)=f(t)X(t)$, and
  three-compartment conservation (\Cref{sec:framework}), providing:
  \begin{itemize}
  \item Generic integrating-factor representation for scalar equations
    $X'=f(t)X$ with continuous coefficient \(f\).
  \item Scalar sign preservation: nonnegative initial value implies nonnegative trajectory.
  \item Interface wrappers for derivative-sign monotonicity.
  \item A three-compartment conservation theorem.
  \end{itemize}
\item An \emph{instantiation} for the classical SIR model
  (\Cref{sec:sir-model}), proving:
  \begin{itemize}
  \item Instantiation of the AFP Picard--Lindel\"of/flow infrastructure for the
    SIR vector field.
  \item Global forward existence for nonnegative initial data.
  \item Uniqueness of the forward trajectory generated by the SIR vector field.
  \item Forward invariance of the nonnegative orthant from initial conditions.
  \item Integrating-factor integral representations for $S$ and $I$.
  \item Conservation of total population ($S + I + R = N$).
  \item Monotonicity of $S$ (nonincreasing) and $R$ (nondecreasing).
  \item The Kermack--McKendrick phase-plane invariant: a conserved level-curve
    relation in the $(S,I)$ plane (under $S(a) > 0$).
  \item The initial effective threshold ratio (\Cref{sec:R0}),
    $\Rinit = \beta S(a)/\gamma$, with pointwise growth/decline conditions,
    distinct from the fully susceptible disease-free equilibrium (DFE) basic
    reproduction quantity.
  \item Monotonicity of $I$ (nonincreasing) on $[a,b]$ when $\Rinit \le 1$.
  \item Stationary infection condition: when $I(t)>0$,
    $I'(t) = 0 \iff S(t) = \gamma/\beta$.
  \item Boundedness ($0 \le X(t) \le N$ for all compartments, from
    nonnegative initial data).
  \end{itemize}
\item A non-circular proof architecture from assumptions to solutions.  The
  scalar \isa{SIR\_solution} locale records the qualitative theorems for any
  differentiable solution on a finite interval.  The \isa{SIR\_ODE} locale
  proves sign and conservation facts for the local AFP flow, uses compact
  continuation to derive the unique global forward solution from $\beta > 0$,
  $\gamma > 0$, and nonnegative initial data, and interprets (i.e.,
  instantiates) the scalar locale on every interval $[0,b]$.  This avoids the
  concrete circularity pitfall of using global nonnegativity as a premise for
  the compact witness that is itself needed to justify global existence.
\end{enumerate}

\paragraph{Approach.}
The development is structured as a locale-based framework.  The framework
theory proves results for arbitrary differentiable functions, and the SIR locale
specializes them once its ODE hypotheses match the generic theorem assumptions.
The $S$ and $I$ equations are homogeneous linear in their own compartments, so
the integrating-factor method yields sign-preserving integral representations;
the $R$ equation is handled by derivative-sign monotonicity.  The separate
existence theory then uses the AFP ODE library to obtain the unique SIR flow and
transfer the qualitative theorems to it.  Globalizing the flow means proving
that the maximal AFP existence interval contains every nonnegative time, not
assuming a global solution in advance.

\paragraph{Outline.}
\Cref{sec:background} reviews the Isabelle interfaces used by the proofs,
including the derivative endpoint convention central to the framework's
interface choices.
\Cref{sec:framework} presents the reusable lemmas, centered on the
integrating-factor theorem. \Cref{sec:sir-model} instantiates them for SIR and
connects them to the AFP flow; its SIR results include forward invariance,
the Kermack--McKendrick invariant, threshold theorems, and boundedness.
\Cref{sec:related-work} positions the result
against adjacent theorem-proving and formal-methods work. \Cref{sec:discussion}
covers assumptions, proof-engineering observations, limitations, artifact
details (\Cref{sec:artifact}), and future work, and \Cref{sec:conclusion} closes
with the main takeaway.
Throughout, displayed theorem statements are faithful prose summaries of checked
Isabelle facts and are written to include their mathematical hypotheses.
Isabelle-facing identifiers are set in sans-serif type, as in \isa{SIR\_solution},
and identify the corresponding artifact constants or locales; monospace names
such as \isatt{Compartmental\_Model} denote source-level theory or session names.

\section{Background}
\label{sec:background}

\subsection{Isabelle/HOL and HOL-Analysis}

Isabelle is a generic interactive theorem prover; its most widely used object
logic, Higher-Order Logic (HOL), is the basis of Isabelle/HOL~\cite{nipkow2002isabelle}.
The system provides a rich type system, a powerful simplifier, and the
structured Isar proof language~\cite{wenzel2002isar}, which supports proofs
inspectable at a higher level than tactic traces alone.  In this development,
Isar and locales make the reusable calculus lemmas visible as theorem contexts
rather than hidden proof scripts.
HOL functions are total, meaning defined on every input of their type.
Consequently, an interval solution is represented by a total function; the ODE
and qualitative properties are asserted only on the interval under discussion.
This representation is one reason the endpoint derivative convention below must
be stated explicitly: the function exists everywhere, while the theorem
hypotheses and conclusions are interval-scoped.

Typographically, ordinary mathematical notation such as $S(t)$ and $I(t)$
states paper-level functions and quantities, while sans-serif names such as
\isa{has\_real\_derivative} mark checked Isabelle constants and theorem names,
and typewriter names such as \isatt{HOL-Analysis} mark theory names or syntax
excerpts.

In Isabelle~2024, \isatt{HOL-Analysis} provides the real and multivariate
analysis
infrastructure used throughout the development: topological and normed vector
spaces, Fr\'echet/field/real derivatives, integration on intervals,
monotonicity consequences of the Mean Value Theorem, and the standard
transcendental functions \(\exp\) and \(\ln\)~\cite{holzl2013typeclasses}.
Its type-class hierarchy lets
the same library lemmas apply uniformly across reals, Euclidean spaces, and
other compatible structures.  The next two subsections spell out the particular
derivative and integration interfaces that drive the
formalized proofs.  The paper uses the real-derivative interface, which is
implemented via the more general Fr\'echet derivative.  The Archive of Formal
Proofs (AFP) entry for ordinary differential equations provides
Picard--Lindel\"of/flow infrastructure:
local flows, maximal existence intervals, and uniqueness for locally Lipschitz
vector fields.  Its SIR instantiation is introduced in
\Cref{sec:non-circular-globalization}.
Readers mainly interested in the SIR theorem package can skim the next two
subsections; their takeaway is that the checked interval theorems deliberately
use full-neighbourhood derivatives at endpoints, while FTC facts first arrive as
within-interval derivatives.

\subsection{Derivatives in Isabelle}
\label{sec:derivatives}

For real-valued functions $f : \R \to \R$, Isabelle provides the abbreviation
\[
  \isa{(f has\_real\_derivative D) (at x)}
\]
which asserts that $f$ has derivative $D$ at point $x$ (in the full
neighbourhood).  Such functions are total HOL functions on~$\R$; interval
statements restrict where the theorem concludes facts, not where the function
is defined.  Values outside the interval matter only insofar as they witness
these full-neighbourhood derivative assumptions.  The derivative abbreviation is
definitionally equal to
\isa{has\_field\_derivative}, which in turn reduces to the general
Fr\'echet derivative \isa{has\_derivative} with a linear multiplication map,
i.e. the linear map $h \mapsto D \cdot h$ (scalar multiplication by $D$).
In practice, this means that \isa{has\_real\_derivative} at an interior point
of $[a,b]$ uses a two-sided limit, not just a one-sided limit from within the
interval.  Thus, at the endpoints of $[a,b]$, the formal assumption is stronger
than a one-sided interval derivative: the function must be differentiable in a
full open neighbourhood of every point in $[a,b]$, including the endpoints.
This makes the scalar lemmas slightly stronger than the usual textbook interval
lemmas, but it is a deliberate interface choice rather than a mathematical
change to the SIR model.
In the SIR application below, trajectories arise from a flow defined on an open
existence interval, so the convention is satisfied on finite closed
subintervals after global existence is established.

Key combination rules used in our development, shown schematically:
\begin{itemize}
\item \isa{DERIV\_mult}: product rule,
  \[
    \frac{d}{dt}(f(t)g(t)) = f'(t)g(t) + f(t)g'(t).
  \]
\item \isa{DERIV\_chain2}: chain rule, $(g \circ f)' = g'(f(x)) \cdot f'(x)$.
\item \isa{DERIV\_isconst2}: the HOL-Analysis constancy rule used here; in the
  interval form relevant below, zero derivative on $(a,b)$ together with
  continuity on $[a,b]$ gives constancy on $[a,b]$.
  Schematically:
  \[
    \bigl(\forall x \in (a,b).\, f'(x)=0\bigr)
    \ \wedge\ f\text{ is continuous on }[a,b]
    \quad\Longrightarrow\quad
    \forall x,y\in[a,b].\, f(x)=f(y).
  \]
\item \isa{DERIV\_atLeastAtMost\_imp\_continuous\_on}: differentiability on
  $[a,b]$ implies continuity on $[a,b]$.
\end{itemize}

\subsection{Integration in Isabelle}
\label{sec:integration}

Our forward invariance proofs rely on the Fundamental Theorem of Calculus.
In HOL-Analysis, this is available as:
\begin{itemize}
\item \isa{integral\_has\_real\_derivative}: if $f$ is continuous on $[a,b]$
  and $t \in [a,b]$, then
  \[
    \lambda x.\, \int_a^x f(s)\,ds
  \]
  has derivative $f(t)$ at~$t$ within~$[a,b]$; this is a within-derivative,
  i.e. a derivative relative to the interval topology, not yet a
  full-neighbourhood derivative.
  Informally, the checked statement has the shape
  \[
    \bigl((\lambda x.\,\int_a^x f(s)\,ds)
      \ \text{\isa{has\_real\_derivative}}\ f(t)\bigr)
    \ \text{\isa{at}}\ t\ \text{\isa{within}}\ \{a..b\}.
  \]
  Isabelle's interval integral is the Henstock--Kurzweil gauge integral; we
  mention this only to identify the HOL-Analysis interface.  For the continuous
  functions used here on compact intervals, it agrees with the usual
  Riemann/Lebesgue value.
  The key interface subtlety for our proofs is the conversion between
  interval-relative and full-neighbourhood derivatives:
\item \isa{at\_within\_Icc\_at}: for $a < t < b$, the filter
  \isa{at t within \{a..b\}} equals the full \isa{at t}, enabling us to convert
  within-derivatives to full-neighbourhood derivatives in the interior.
  Mathematically, an interval-relative derivative and an unrestricted derivative
  coincide at interior points because a neighbourhood of such a point lies inside
  the interval.  This is a general topology/filter conversion, not an integration
  theorem, but it is paired here with the FTC fact because the conversion is used
  immediately after applying that fact.
\end{itemize}
In short, the scalar lemmas assume ordinary two-sided derivatives, the FTC first
supplies interval-relative derivatives, and our proofs convert those derivatives
only at interior points.

This conversion is one reason the framework records its endpoint derivative
convention explicitly in \Cref{sec:framework}.  The
product and chain rules in the integrating-factor proof are stated for full
derivatives, so the within-interval FTC derivative is converted only at interior
points.  The constancy argument uses full derivatives only on the open interval $(a,b)$;
endpoint differentiability is used mainly to obtain closed-interval continuity
in the artifact's chosen interface.

A within-derivative variant would assume less and therefore state a more general
theorem, but it was not needed for the AFP-flow application pursued here and
would require reworking the constancy and chain-rule steps against within
filters.

\subsection{Locales}
\label{sec:locales}

Isabelle's \emph{locale} mechanism~\cite{ballarin2014locales} provides
named proof contexts with fixed parameters and assumed properties; roughly,
locales are parameterized theories whose assumptions can later be instantiated.
Intuitively, a locale is a named bundle of assumptions; an interpretation proves
that the bundle holds in a concrete context and then makes all theorems proved
under those assumptions available there.
Theorems proved inside a locale are automatically parametric in those
parameters, enabling systematic reuse via locale interpretation or
direct rule application.
Locale interpretation discharges the assumptions of one locale in a new context
and transfers its derived lemmas; this is the mechanism behind the bridge from
scalar results to the global SIR flow in \Cref{sec:non-circular-globalization}.
For example, once the AFP SIR flow is shown to satisfy the assumptions of
\isa{SIR\_solution} on an interval, the theorems proved in that locale become
available for the flow on that interval.

In this paper, a generic scalar theorem proved in one locale becomes an SIR
theorem once a concrete SIR locale supplies the theorem's parameters and
assumptions.  The rest of the paper uses ``locale'' in this Isabelle sense.

\section{A Reusable Compartmental Framework}
\label{sec:framework}

One central contribution of this work is a focused, reusable library of generic
lemmas for compartment equations, derivative-sign arguments, and conservation.
The lemmas are used by the SIR instantiation and apply to equations of the form
$X'=f(t)X$ and to three-compartment conservation sums.
The Isabelle-facing packaging exposes these textbook arguments as reusable
locale-friendly interface lemmas with the derivative and continuity witnesses
needed by the SIR transfer proofs.

As outlined in \Cref{sec:introduction}, this framework fragment provides four
classes of results:
(1)~linear ODE solutions via the integrating factor,
(2)~scalar sign-preservation lemmas (composable to prove orthant invariance
for specific systems),
(3)~interface wrappers---convenience lemmas that package derivative-sign
conditions in the form used by the SIR proofs---for derivative-sign monotonicity,
and
(4)~conservation of compartment sums.

The present interface is deliberately scoped to SIR-relevant ingredients rather
than a general compartmental-model language.  The scalar sign and monotonicity
lemmas cover own-variable equations of the form $X'=f(t)X$ and derivative-sign
arguments, while the conservation theorem below has the three-compartment shape
needed for the checked SIR sign, conservation, and monotonicity obligations.
Within that scope, the scalar lemmas apply to later compartment proofs whose
ODE can isolate a state variable in the form $X'=f(t)X$ with continuous
coefficient $f$; for example, they cover the susceptible equation of
SI/SIR-style mass-action models and infectious equations such as those in
SIS/SIR models, once their coefficients are supplied by the surrounding model
proof.
The conservation component is also directly reusable for other
three-compartment conservation subgoals with the same zero-sum derivative shape,
even when additional sign or existence arguments remain model-specific.
Affine source terms and indexed compartment sums, such as SEIR or SIRS, require
the extensions listed in \Cref{sec:limitations}.

The framework theory uses the interfaces introduced in \Cref{sec:background}:
full-neighbourhood derivatives for the scalar locale assumptions,
within-interval FTC facts converted in the interior, and locales for
transferring generic sign and conservation lemmas to the SIR instance.

\subsection{Linear ODE Solution via Integrating Factor}

Many compartmental equations, including the $S$ and $I$ equations of basic SIR,
are \emph{homogeneous linear in their own variable}: $X'(t)=f(t)X(t)$ where
$f$ is a continuous coefficient function.  The classical integrating factor
method~\cite{teschl2012ode} yields the following integral representation.  This
representation is the exact equality form behind comparison bounds; here it
immediately gives sign preservation because the exponential factor is positive.
The formal statement uses the full-neighbourhood derivative convention of
\Cref{sec:derivatives}; one-sided endpoint variants are not formalized here.
This matches the checked Isabelle statement: the hypothesis supplies
derivatives on all of $[a,b]$, while the zero-derivative constancy step below
uses derivatives only on the open interval $(a,b)$.
The mathematical calculation is standard; the proof-engineering issue is
matching it to Isabelle's derivative filters and continuity obligations.

\begin{theorem}[\isa{linear\_ode\_solution}]
\label{thm:linear-ode}
Let $f, X : \R \to \R$, with $f$ continuous on $[a,b]$ and $a < b$, and
suppose that $X$ satisfies
\[
  (X\ \text{\isa{has\_real\_derivative}}\ f(t) \cdot X(t))
  \ \text{\isa{at}}\ t
\]
at each $t \in [a,b]$.
Define $F(t) = \int_a^t f(s)\, ds$.  The checked source uses Isabelle's
interval integral, which is oriented in general; the theorem conclusion here is
only for $t \in [a,b]$.  Consequently, for all $t \in [a,b]$:
\begin{equation}
\label{eq:linear-ode-representation}
  X(t) = X(a) \cdot \exp(F(t))
\end{equation}
\end{theorem}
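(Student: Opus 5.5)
The plan is the classical integrating-factor argument: show that $G(t) = X(t)\exp(-F(t))$ is constant on $[a,b]$ and read off $G(t)=G(a)=X(a)$, since $F(a)=0$.

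First, obtain the derivative of $F$. By \isa{integral\_has\_real\_derivative}, continuity of $f$ on $[a,b]$ gives that $F$ has derivative $f(t)$ at $t$ within $[a,b]$ for every $t\in[a,b]$. For $a<t<b$ we convert this to a full-neighbourhood derivative using \isa{at\_within\_Icc\_at}. This interior restriction is deliberate: we never need full derivatives of $F$ at the endpoints.

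Next, for $t\in(a,b)$, combine the chain rule (\isa{DERIV\_chain2}, with $\exp$ and the map $t\mapsto -F(t)$) and the product rule (\isa{DERIV\_mult}) with the hypothesis $X'(t)=f(t)X(t)$. This gives
\[
  G'(t) = f(t)X(t)e^{-F(t)} - X(t)f(t)e^{-F(t)} = 0 .
\]
Closing this requires only ring normalisation.

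To apply the constancy rule \isa{DERIV\_isconst2} we also need $G$ continuous on $[a,b]$.
\begin{itemize}
\item $X$ is continuous on $[a,b]$ by \isa{DERIV\_atLeastAtMost\_imp\_continuous\_on}, since the hypothesis supplies derivatives at every point of $[a,b]$.
\item $F$ is continuous on $[a,b]$ because it has a within-derivative at each point of $[a,b]$. Equivalently, an indefinite integral of a continuous function is continuous on the interval.
\item $\exp\circ(-F)$ is then continuous by composition, and so is the product $G$.
\end{itemize}
\isa{DERIV\_isconst2} then yields $G(t)=G(a)$ for all $t\in[a,b]$. Since $F(a)=\int_a^a f=0$, we get $G(a)=X(a)$. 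Multiplying both sides by $\exp(F(t))$ and using $\exp(-F(t))\exp(F(t))=1$ gives \eqref{eq:linear-ode-representation}.

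The main obstacle is the derivative-filter bookkeeping, not the algebra. The FTC only provides within-$[a,b]$ derivatives of $F$, whereas the product and chain rules are used with full derivatives. The plan handles this by working with full derivatives only on the open interval $(a,b)$, exactly what \isa{DERIV\_isconst2} needs, and by treating the endpoints purely through continuity.

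If continuity of $F$ is awkward to obtain directly, I would first try deriving it from the within-derivative via the general fact that differentiability within a set implies continuity within it. Failing that, I would fall back on a library lemma on continuity of indefinite integrals.

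The orientation of Isabelle's interval integral needs no attention, because every $t$ considered lies in $[a,b]$ with $a<b$.
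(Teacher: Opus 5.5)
Your proposal is correct and follows the paper's proof essentially step for step. Both use the integrating factor $X(t)\exp(-F(t))$, take the FTC within-$[a,b]$ derivative of $F$ and convert it to a full derivative only on $(a,b)$ via \isa{at\_within\_Icc\_at}, and apply the chain and product rules to get a zero derivative. Both then obtain continuity of $X$ and $F$ on $[a,b]$ (the latter from the within-derivative, as the paper does with \isa{DERIV\_continuous}) and finish with \isa{DERIV\_isconst2} and $F(a)=0$.
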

The endpoint derivative assumption in this checked theorem is stronger than
one-sided endpoint differentiability; it follows the full-neighbourhood
interface stated in \Cref{sec:derivatives}.
We use the convention $F(t)=\int_a^t f(s)\,ds$, so the representation has the
positive exponent $X(a)\exp(F(t))$.
In Isabelle, $X$ is a total function $\R \to \R$; the displayed representation
is concluded only on $[a,b]$.
The equality is an explicit representation of any supplied solution, but it is
not a decoupled closed form because the coefficient $f$ may itself depend on the
trajectory.
In the SIR instantiation, the theorem is used with $X=S$ and $f=-\beta I$,
and with $X=I$ and $f=\beta S-\gamma$, to obtain the sign and integral facts
reported in \Cref{sec:sir-model}.

\begin{proof}
Define $g(t) = X(t) \cdot \exp(-F(t))$. Since the derivative hypothesis gives
differentiability at every point of $[a,b]$, $X$ is continuous there.  We show
$g'(t) = 0$ on $(a,b)$; continuity of \(f\) supplies the interval-integrability
facts behind \(F\):
\begin{enumerate}
\item The FTC (\isa{integral\_has\_real\_derivative}) first gives that $F$ has
  derivative $f(t)$ within $[a,b]$; at interior points, the interval-filter
  conversion changes that within-interval derivative to the full \isa{at t}
  derivative used by the product and chain rules.  Continuity of $F$ on
  $[a,b]$ is discharged by this within-interval derivative and
  \isa{DERIV\_continuous}.
  Equivalently, for every $t\in(a,b)$,
  \isa{at t within \{a..b\} = at t}, so $F'(t)=f(t)$ in the
  full-neighbourhood sense.
\item By the chain rule, $\frac{d}{dt}[\exp(-F(t))] = -f(t) \cdot \exp(-F(t))$.
\item For each $t\in(a,b)$, the product rule and the ODE hypothesis
  $X'(t)=f(t)X(t)$ give:
\[
  g'(t) = f(t) X(t) \cdot \exp(-F(t)) + X(t) \cdot (-f(t)) \exp(-F(t)) = 0
\]
\end{enumerate}
Since $X$ is continuous on $[a,b]$ by the derivative hypothesis and
\isa{DERIV\_\allowbreak atLeastAtMost\_\allowbreak imp\_\allowbreak continuous\_on},
$g$ is continuous there as a product of continuous functions.  With zero
derivative on $(a,b)$, \isa{DERIV\_isconst2} gives $g(t) = g(a)$.
Now $g(a) = X(a) \cdot \exp(-F(a)) = X(a) \cdot \exp(0) = X(a)$ (since
$F(a) = \int_a^a f = 0$). Therefore $X(t) \cdot \exp(-F(t)) = X(a)$,
and multiplying both sides by $\exp(F(t))$ (which is nonzero) gives
$X(t) = X(a) \cdot \exp(F(t))$.
\end{proof}

\medskip
The paper calculation is short; the checked lemma's work is to package total
functions, closed-interval hypotheses, endpoint derivative conventions, and
explicit derivative and continuity witnesses into one locale-friendly rule, so
later SIR proofs can instantiate the calculation without replaying
automation-sensitive calculus search.
The key technical points in the Isabelle proof:
\begin{itemize}
\item The FTC derivative is first a derivative \emph{within} $[a,b]$; the proof
  uses the full-neighbourhood conversion only in the interior.
\item The chain rule for $\exp(-F(t))$ is applied via \isa{DERIV\_chain2}
  with explicit instantiation, avoiding the timeout-prone
  \isa{auto intro!:\ derivative\_intros}.  In the artifact this occurs inside
  \isa{linear\_ode\_solution} rather than being hidden behind automation.
\item Continuity of $\exp(-F)$ uses \isa{continuous\_on\_exp} applied to
  the negated integral function directly.
\end{itemize}

\subsection{Scalar Sign Preservation}

The integral representation immediately yields sign preservation:

\begin{corollary}[\isa{linear\_ode\_nonneg}]
\label{cor:nonneg}
Under the hypotheses of \Cref{thm:linear-ode}, if $X(a) \ge 0$ then
$X(t) \ge 0$ for all $t \in [a,b]$.
\end{corollary}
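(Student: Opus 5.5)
The plan is to read the corollary off the integral representation of \Cref{thm:linear-ode}, with no new calculus. Fix $t \in [a,b]$. Since the corollary assumes exactly the hypotheses of that theorem ($f$ continuous on $[a,b]$, $a<b$, and the full-neighbourhood derivative $X'(t)=f(t)X(t)$ at every point of $[a,b]$), I instantiate \Cref{thm:linear-ode} directly. This gives
\[
  X(t) = X(a)\cdot \exp\Bigl(\int_a^t f(s)\,ds\Bigr).
\]

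The sign argument then has two pieces. First, $\exp(F(t)) > 0$ holds for every real argument. In Isabelle I would use \isa{exp\_gt\_zero}, or its nonnegative weakening. Nothing about $F(t)$ itself is needed, in particular not its finiteness or integrability, because $F(t)$ is just some real number. Second, the hypothesis $X(a)\ge 0$ combined with the product of nonnegative reals (\isa{mult\_nonneg\_nonneg}) yields $X(t)\ge 0$. Since $t$ was arbitrary in $[a,b]$, this proves the claim.

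I do not expect a genuine mathematical obstacle. The only friction I anticipate is in the mechanization. The representation must be invoked with the same bound variable and interval membership $t\in\{a..b\}$, and the theorem's conclusion should be rewritten left to right before simplification. Otherwise the simplifier may try to reason about the integral term. To sidestep this, I would first state the representation as an explicit \isa{have} and then close the goal with a one-line \isa{by (simp add: \ldots)} or \isa{by (rule mult\_nonneg\_nonneg)} using the positivity of $\exp$.

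A remark for later use: the same argument with strict positivity gives $X(a)>0 \Rightarrow X(t)>0$. That variant is the one the Kermack--McKendrick section will need under $S(a)>0$, so I would prove it alongside this corollary.
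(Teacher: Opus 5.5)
Your proposal is correct and matches the paper's proof. You instantiate the representation $X(t)=X(a)\exp(F(t))$ from \Cref{thm:linear-ode} and conclude nonnegativity from $\exp(F(t))>0$ and $X(a)\ge 0$; your strict-positivity remark is also the paper's separate corollary \isa{linear\_ode\_pos}, proved the same way.
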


\begin{proof}
By \Cref{eq:linear-ode-representation},
$X(t) = X(a) \cdot \exp(F(t))$. Since $\exp(F(t)) > 0$ and
$X(a) \ge 0$, the product is nonnegative.
\end{proof}

\begin{corollary}[\isa{linear\_ode\_pos}]
\label{cor:pos}
Under the same hypotheses, if $X(a) > 0$ then $X(t) > 0$ for all $t \in [a,b]$.
\end{corollary}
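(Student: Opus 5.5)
The plan is to argue exactly as in \Cref{cor:nonneg}, invoking the integral representation of \Cref{thm:linear-ode} rather than repeating any calculus. The hypotheses of \Cref{cor:pos} are those of \Cref{thm:linear-ode}: $f$ continuous on $[a,b]$, $a<b$, and the full-neighbourhood derivative hypothesis on all of $[a,b]$. Hence \Cref{eq:linear-ode-representation} gives, for each $t\in[a,b]$,
\[
  X(t) = X(a)\cdot\exp(F(t)), \qquad F(t)=\int_a^t f(s)\,ds .
\]

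Next I would fix an arbitrary $t\in[a,b]$ and note that $\exp(F(t))>0$. This holds because the real exponential is strictly positive on every real argument (\isa{exp\_gt\_zero} in HOL-Analysis), and it needs no information about $F$ beyond $F(t)$ being a real number. Together with the new hypothesis $X(a)>0$, the product of two strictly positive reals is strictly positive (\isa{mult\_pos\_pos}), so $X(t)>0$. Since $t$ was arbitrary, this gives the universally quantified conclusion on $[a,b]$.

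In Isabelle I expect the proof to be a single instantiation of \isa{linear\_ode\_solution} at the chosen $t$, followed by rewriting with the resulting equation and closing the goal by \isa{simp} using positivity of \isa{exp}. The only step that needs care is matching the corollary's hypotheses to the theorem's assumptions, so that the rule applies without re-deriving continuity or derivative witnesses. This is routine, because the corollary is stated ``under the same hypotheses''.
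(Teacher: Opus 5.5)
Your proposal is correct and follows the paper's proof: both apply the representation $X(t)=X(a)\exp(F(t))$ from \Cref{thm:linear-ode} and conclude that $X(t)>0$, since $X(a)>0$ and $\exp(F(t))>0$ make it a product of two strictly positive reals.
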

\begin{proof}
The same representation gives $X(t)=X(a)\exp(F(t))$, a product of two positive
terms.
\end{proof}

Theorem~\ref{thm:linear-ode} and these corollaries are designed to apply to any
differentiable solution of $X'=f(t)X$ with continuous $f$, independently of how
that solution is supplied.  The checked reuse demonstrated here is the SIR
instantiation; a different model, such as SI, would still require its own
mechanized existence/flow bridge before making a checked model claim.

\subsection{Nonnegativity from Nonnegative Derivative}

For compartments like $R$ in the SIR model, whose derivatives are nonnegative
(not linear in the compartment itself), we provide:

\begin{lemma}[\isa{nonneg\_deriv\_nonneg}]
\label{lem:nonneg-deriv}
Let $X,g : \R \to \R$ and $a < b$.  Suppose that, for every
$t\in[a,b]$, $X$ has full-neighbourhood derivative $g(t)$ at $t$, using
the same \isa{has\_real\_derivative} convention as \Cref{thm:linear-ode}.
If $g(t) \ge 0$ for all $t \in [a,b]$ and $X(a) \ge 0$, then
$X(t) \ge 0$ for all $t \in [a,b]$.
\end{lemma}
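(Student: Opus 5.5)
The plan is to show that $X$ is nondecreasing on $[a,b]$ and then conclude $X(t)\ge X(a)\ge 0$. The natural tool is the Mean Value Theorem, either in its monotonicity form from HOL-Analysis (as recalled in \Cref{sec:background}) or directly.

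Fix $t\in[a,b]$. If $t=a$, the claim is just the hypothesis $X(a)\ge 0$. Otherwise $a<t\le b$, and we work on the subinterval $[a,t]$.

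\begin{itemize}
\item \textbf{Continuity.} Since $X$ has a full-neighbourhood derivative at every point of $[a,t]\subseteq[a,b]$, the rule \isa{DERIV\_atLeastAtMost\_imp\_continuous\_on} gives continuity of $X$ on $[a,t]$.
\item \textbf{Differentiability.} $X$ is differentiable at every interior point of $(a,t)$.
\item \textbf{Mean Value Theorem.} It supplies some $\xi\in(a,t)$ with
\[
  X(t)-X(a) = g(\xi)\,(t-a).
\]
\item \textbf{Sign.} Since $\xi\in[a,b]$, we have $g(\xi)\ge 0$, and also $t-a>0$. Hence $X(t)\ge X(a)\ge 0$.
\end{itemize}

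An alternative packaging uses the HOL-Analysis monotonicity lemma stated for derivatives on an interval, which directly gives $X(a)\le X(t)$ whenever $a\le t$ in $[a,b]$. The derivative-sign hypothesis is discharged from $g\ge 0$ pointwise. This would also yield the ``interface wrapper'' for derivative-sign monotonicity mentioned in \Cref{sec:framework}, and the lemma would then follow as a one-line corollary.

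The main obstacle is interface matching rather than mathematics. The MVT rule in HOL-Analysis (e.g.\ \isa{MVT2}) expects the derivative hypothesis for all $x$ in the open interval $a<x<t$, together with the specific derivative value being available as a term. I would therefore:
\begin{itemize}
\item restrict the full-neighbourhood hypothesis from $[a,b]$ to $(a,t)$ by the inclusion $(a,t)\subseteq[a,b]$, and
\item use the derivative witness $g(\xi)$ explicitly, rather than an existential ``some derivative''.
\end{itemize}
This avoids a uniqueness-of-derivative step (\isa{DERIV\_unique}) that would otherwise be needed to identify the MVT's derivative with $g(\xi)$. If the chosen library rule instead produces an arbitrary derivative value $l$ with $X'(\xi)=l$, I would invoke \isa{DERIV\_unique} with the hypothesis at $\xi$ to rewrite $l=g(\xi)$ before the sign step.
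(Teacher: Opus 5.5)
Your proof is correct and matches the paper's route. The paper invokes the MVT-based wrapper \isa{nondecreasing\_from\_nonneg\_derivative} to get that $X$ is nondecreasing and then concludes $X(t)\ge X(a)\ge 0$, which is your ``alternative packaging''; your direct MVT argument on $[a,t]$ simply unfolds that wrapper.
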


\begin{proof}
Since $g(t)\ge0$ on $[a,b]$, the Mean-Value-Theorem consequence
\isa{nondecreasing\_\allowbreak from\_\allowbreak nonneg\_\allowbreak derivative}
gives that $X$ is nondecreasing.
Hence $X(t) \ge X(a) \ge 0$.
\end{proof}
The assumption $a<b$ matches the checked nondegenerate-interval theorem used by
the monotonicity wrapper; at the endpoint $t=a$, the conclusion is just the
initial premise.
Thus \isa{nonneg\_deriv\_nonneg} is a monotonicity-implies-nonnegativity wrapper,
not a Nagumo-style boundary-invariance argument.

\subsection{Three-Compartment Conservation}

\begin{theorem}[\isa{three\_compartment\_conservation}]
\label{thm:conservation}
Let $f, g, h, D_f, D_g, D_h : \R \to \R$ and let $a,b\in\R$ with $a < b$.
Suppose that, for every $s\in[a,b]$, the functions $f$, $g$, and $h$ have
full-neighbourhood derivative values $D_f(s)$, $D_g(s)$, and $D_h(s)$,
respectively, using the same derivative convention as \Cref{thm:linear-ode}.
If $D_f(s) + D_g(s) + D_h(s) = 0$ for every $s\in[a,b]$, then
\[
  f(t) + g(t) + h(t) = f(a) + g(a) + h(a) \quad \text{for all } t \in [a,b].
\]
\end{theorem}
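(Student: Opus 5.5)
We prove \Cref{thm:conservation} by showing that the sum $\Sigma(t) = f(t)+g(t)+h(t)$ has zero derivative and is therefore constant, exactly as in the constancy step of the proof of \Cref{thm:linear-ode}.

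First, at each $s\in[a,b]$ we apply the sum rule for \isa{has\_real\_derivative} (two uses of \isa{DERIV\_add}) to the three hypotheses. This gives that $\Sigma$ has full-neighbourhood derivative $D_f(s)+D_g(s)+D_h(s)$ at $s$. The zero-sum hypothesis rewrites this value to $0$, so $\Sigma$ has derivative $0$ at every point of $[a,b]$, and in particular on $(a,b)$.

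Second, we discharge the continuity side condition. Because $\Sigma$ is differentiable at every point of the closed interval, \isa{DERIV\_atLeastAtMost\_imp\_continuous\_on} gives that $\Sigma$ is continuous on $[a,b]$. Alternatively, we can take continuity of $f$, $g$, $h$ separately via \isa{DERIV\_continuous} and combine them with \isa{continuous\_on\_add}.

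Third, we invoke \isa{DERIV\_isconst2} with zero derivative on $(a,b)$ and continuity on $[a,b]$. This yields $\Sigma(t)=\Sigma(a)$ for all $t\in[a,b]$, which is the claimed identity once $\Sigma$ is unfolded. The hypothesis $a<b$ is used only to match the nondegenerate form of the constancy rule; the case $t=a$ is trivial anyway.

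The mathematics here is immediate. The main obstacle is the Isabelle interface: the constancy rule is stated for a single function given as a $\lambda$-term, so we will name $\Sigma$ explicitly as $\lambda t.\, f\,t+g\,t+h\,t$. We must then make sure the derivative value produced by the sum rule is syntactically rewritten to $0$ through the pointwise zero-sum premise, rather than relying on unification to find it. Following the paper's stated practice, we will supply the derivative witnesses explicitly instead of calling \isa{auto intro!:\ derivative\_intros}.
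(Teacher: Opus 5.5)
Your proposal is correct and matches the paper's proof: apply the sum rule to get a zero derivative for $f+g+h$, obtain continuity on $[a,b]$ from differentiability, and conclude constancy with \isa{DERIV\_isconst2}. Your remarks about naming the sum as a $\lambda$-term and explicitly rewriting the derivative value to $0$ are reasonable Isabelle-level details that the paper's short proof leaves implicit.
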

We specialize to three compartments here because this is the shape needed by
SIR; \Cref{sec:limitations} records the indexed $n$-compartment generalization
as future work.
The separate theorem exposes the derivative witnesses and zero-sum obligation in
a reusable Isabelle interface.
The assumption $a<b$ matches the HOL-Analysis constancy lemma; at $a=b$, the
conclusion is the trivial identity
$f(a)+g(a)+h(a)=f(a)+g(a)+h(a)$.

\begin{proof}
Define $N(s) = f(s) + g(s) + h(s)$. By the sum rule,
$N'(s) = D_f(s) + D_g(s) + D_h(s) = 0$. By continuity (from differentiability)
and \isa{DERIV\_isconst2}: $N(t) = N(a)$.
\end{proof}

\subsection{Monotonicity Wrappers}

\begin{corollary}[\isa{nonincreasing\_from\_nonpos\_derivative}]
Let $f,D:\R\to\R$. If $s \le t$, and for every $u\in[s,t]$ the function
$f$ has full-neighbourhood derivative $D(u)$ at $u$ with $D(u) \le 0$,
then $f(t) \le f(s)$.
\end{corollary}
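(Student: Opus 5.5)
The plan is to reduce the statement to the HOL-Analysis Mean-Value-Theorem consequence already used for \Cref{lem:nonneg-deriv}, applied to the negated function. I first dispose of the degenerate case $s=t$, where the conclusion $f(t)\le f(s)$ is just reflexivity. So assume $s<t$; this supplies the nondegenerate-interval hypothesis required by the monotonicity lemma.

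For $s<t$, I set $h(u)=-f(u)$. By the derivative rule for negation (\isa{DERIV\_minus}), $h$ has full-neighbourhood derivative $-D(u)$ at each $u\in[s,t]$, and $-D(u)\ge 0$ there. Two routes are then available.
\begin{itemize}
\item I can apply \isa{nondecreasing\_from\_nonneg\_derivative} on $[s,t]$ to $h$, concluding $h(s)\le h(t)$, that is, $f(t)\le f(s)$.
\item Alternatively, I can apply the Mean Value Theorem directly (\isa{MVT2} or \isa{DERIV\_nonpos\_imp\_nonincreasing}). This gives some $\xi\in(s,t)$ with $f(t)-f(s)=(t-s)D(\xi)\le 0$, since $t-s>0$ and $D(\xi)\le 0$.
\end{itemize}
Either way, the continuity obligation on $[s,t]$ is discharged from the derivative hypothesis via \isa{DERIV\_atLeastAtMost\_imp\_continuous\_on}. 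Differentiability on the open interval is just the hypothesis restricted to $(s,t)$.

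The mathematics is trivial, so the only real obstacle is interface alignment. The existing library lemma may be phrased with existential derivatives (``$\exists y$. DERIV f x :> y $\wedge$ $y\le 0$'') or with within-interval derivatives. In that case the witness $D(u)$ must be supplied explicitly, and a full \isa{at} derivative must be weakened to \isa{at u within \{s..t\}} (via \isa{has\_field\_derivative\_at\_within}). I would first try to instantiate \isa{DERIV\_nonpos\_imp\_nonincreasing} directly with the witness $D$. If that does not fit, I would fall back on the negation-plus-nondecreasing route above, keeping the case split on $s=t$ versus $s<t$ explicit rather than relying on automation.
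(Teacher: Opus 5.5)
Your proposal is correct and essentially follows the paper's route. The checked corollary is a short wrapper around \isa{DERIV\_nonpos\_imp\_nonincreasing}, instantiated with the explicit witness $D(u)$, and the paper likewise notes that the case $s=t$ is reflexive; your negation fallback via \isa{nondecreasing\_from\_nonneg\_derivative} is also sound but unnecessary, because the library lemma already accepts $s\le t$ together with full-neighbourhood derivatives in the existential form you anticipated.
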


\begin{corollary}[\isa{nondecreasing\_from\_nonneg\_derivative}]
Let $f,D:\R\to\R$. If $s \le t$, and for every $u\in[s,t]$ the function
$f$ has full-neighbourhood derivative $D(u)$ at $u$ with $D(u) \ge 0$,
then $f(s) \le f(t)$.
\end{corollary}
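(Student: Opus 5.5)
The corollary \isa{nondecreasing\_from\_nonneg\_derivative} can be obtained by reduction to the nonincreasing wrapper stated just before it, applied to $-f$, or by repeating that wrapper's argument directly. I will present the direct argument and treat the reduction as an alternative. First dispose of the degenerate case $s=t$: the conclusion $f(s)\le f(t)$ is then reflexivity, so no derivative reasoning is needed. This split matters because the HOL-Analysis Mean Value Theorem interfaces are stated for nondegenerate intervals, in the same way as \Cref{lem:nonneg-deriv} and \Cref{thm:conservation} assume $a<b$.

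For $s<t$, the full-neighbourhood derivative hypothesis at every $u\in[s,t]$ gives continuity of $f$ on $[s,t]$ via \isa{DERIV\_atLeastAtMost\_imp\_continuous\_on}. The same hypothesis gives differentiability at every interior point $u\in(s,t)$. The Mean Value Theorem then supplies some $\xi\in(s,t)$ with
\[
  f(t)-f(s) = D(\xi)\,(t-s).
\]
Here one must check that the derivative value produced by the MVT coincides with $D(\xi)$. This follows from uniqueness of the real derivative at $\xi$, and in Isabelle it can be avoided by using the MVT variant that takes the derivative function $D$ as a parameter. Since $\xi\in[s,t]$, we have $D(\xi)\ge 0$, and $t-s>0$, so $f(t)-f(s)\ge 0$.

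For the alternative route, set $\tilde f=-f$ and $\tilde D=-D$. The rule for derivatives of negation turns the hypothesis into $(\tilde f$ \isa{has\_real\_derivative} $\tilde D(u))$ at $u$ with $\tilde D(u)\le 0$ for every $u\in[s,t]$. The nonincreasing corollary then gives $-f(t)\le -f(s)$, which rearranges to the claim.

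I expect no genuine mathematical obstacle. The only friction is interface-level: instantiating the MVT lemma's derivative argument as $\lambda h.\,D(u)\cdot h$ to match \isa{has\_field\_derivative}, and keeping the interior/closed-interval membership side conditions straight. I would try the library lemma \isa{DERIV\_nonneg\_imp\_nondecreasing} first, discharging its premises directly from the hypothesis.
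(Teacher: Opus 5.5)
Your proposal is correct and follows the paper's approach: the checked corollary is a short wrapper around the HOL-Analysis lemma \isa{DERIV\_nonneg\_imp\_nondecreasing}, itself a Mean Value Theorem consequence and exactly the library lemma you would try first, with the case $s=t$ reflexive and the nondegenerate case being the MVT argument you spell out. Your negation-based reduction to the nonincreasing wrapper is also valid but unnecessary, since the paper derives each wrapper directly from its own library lemma.
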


The displayed corollaries wrap the corresponding HOL-Analysis monotonicity
lemmas, which are direct Mean Value Theorem consequences.  In the checked file,
they are short corollaries of \isa{DERIV\_nonpos\_imp\_nonincreasing} and
\isa{DERIV\_nonneg\_imp\_nondecreasing}.  They expose the
derivative witness, sign premise, continuity facts, and interval-order premises
explicitly, turning each SIR monotonicity proof into the local obligations
``provide the derivative expression'' and ``prove its sign.''
They are used for $S$ and $R$ monotonicity and for interval-wide monotonicity of
$I$ under $\Rinit \le 1$.
Here ``has derivative'' uses the same full-neighbourhood
\isa{has\_real\_derivative} convention as \Cref{thm:linear-ode}; the interval
$[s,t]$ scopes the sign premise and conclusion.
When $s=t$, the conclusions are reflexive; the nondegenerate case is the usual
derivative-sign monotonicity argument.
They are reusable within developments that adopt this artifact's
derivative-interface convention.  A more general HOL-Analysis or AFP version
would be natural future library work, but would require a broader API decision.

\subsection{Design Decisions}

\paragraph{Why keep the integrating-factor proof?}
The SIR existence theory now uses the AFP Picard--Lindel\"of/flow
infrastructure~\cite{immler2012ode} to construct the unique solution.  We
nevertheless keep the sign-preservation proof separate because it exploits the
\emph{homogeneous linear} structure of the relevant compartmental equations and
because the framework theory imports no SIR-specific material.  The broader
design trade-off is discussed in \Cref{sec:discussion-if-vs-pl}.

\section{SIR Model Instantiation}
\label{sec:sir-model}

\subsection{The SIR ODE System}
\label{sec:sir-ode-system}

The mass-action classical SIR model~\cite{kermack1927sir} partitions a closed
population with no births, deaths, immigration, or emigration (i.e., no vital
dynamics) into susceptible~($S$), infectious~($I$), and recovered~($R$)
compartments (\Cref{fig:sir-flow}):
\begin{figure}[ht]
\centering
\begin{tikzpicture}[
  >=Stealth,
  node distance=2.4cm,
  compartment/.style={draw, rounded corners, minimum width=1.35cm, minimum height=0.75cm}
]
\node[compartment] (S) {$S$};
\node[compartment, right=of S] (I) {$I$};
\node[compartment, right=of I] (R) {$R$};
\draw[->] (S) -- node[above] {$\beta S I$} (I);
\draw[->] (I) -- node[above] {$\gamma I$} (R);
\end{tikzpicture}
\caption{Closed-population SIR compartment flow.  The transmission term moves
individuals from susceptible to infectious, and the recovery term moves
individuals from infectious to recovered.}
\label{fig:sir-flow}
\end{figure}
\begin{align}
  \deriv{S}{t} &= -\beta S I \label{eq:sir-s} \\
  \deriv{I}{t} &= \beta S I - \gamma I \label{eq:sir-i} \\
  \deriv{R}{t} &= \gamma I \label{eq:sir-r}
\end{align}
We suppress the explicit time argument in these displayed equations:
$S$, $I$, and $R$ denote $S(t)$, $I(t)$, and $R(t)$, while the derivative
notation denotes the derivative at that same time.
For the forward initial-value problem used by \isa{SIR\_ODE}, these equations
are paired with
\[
  S(0)=S_0,\qquad I(0)=I_0,\qquad R(0)=\Rrecinit,
  \qquad S_0,I_0,\Rrecinit\ge0.
\]
Here \(\Rrecinit\) is the recovered initial value.  For this forward problem,
$N_0=S_0+I_0+\Rrecinit$ denotes the total initial population; finite-interval
statements below write
$N=S(a)+I(a)+R(a)$.
When the flow bridge specializes a finite interval to $a=0$, these agree:
$N=N_0$.
The \isa{SIR\_solution} locale uses a generic finite interval $[a,b]$, enabling
reuse on arbitrary subintervals of a trajectory; the \isa{SIR\_ODE} flow results
specialize it to the forward initial time $0$.
Thus $[a,b]$ denotes the generic finite-interval locale below, while $[0,b]$
denotes the forward-flow intervals obtained from \isa{SIR\_ODE}.
The main notation used below is:
\begin{center}
\small
\begin{tabular}{@{}lp{0.68\linewidth}@{}}
\multicolumn{2}{@{}l}{\emph{Forward problem}} \\
$R(t)$ & recovered compartment at time $t$ \\
$\Rrecinit$ & recovered initial value in the forward state triple \\
$N_0$ & $S_0+I_0+\Rrecinit$, the forward problem's initial total population \\
\addlinespace
\multicolumn{2}{@{}l}{\emph{Finite interval}} \\
$N$ & $S(a)+I(a)+R(a)$ on a finite interval $[a,b]$ \\
\addlinespace
\multicolumn{2}{@{}l}{\emph{Threshold ratios}} \\
$\Rinit$ & $\beta S(a)/\gamma$, initial effective threshold ratio \\
$\Reff(t)$ & $\beta S(t)/\gamma$, time-dependent threshold ratio \\
$\Rclassical$ & $\beta N/\gamma$, fully susceptible disease-free
next-generation quantity for the same total population $N$ at the DFE
$(S,I,R)=(N,0,0)$; for the forward flow, $N=N_0$; see \Cref{sec:R0}
\end{tabular}
\end{center}
The notation $\Rrecinit$ is always the recovered initial value, never a
reproduction number; the paper avoids the symbol $R_0$ for both recovered initial
values and threshold quantities because of its epidemiological association with
reproduction numbers.
Throughout, italic \(R\) denotes the recovered compartment, while calligraphic
\(\mathcal R\) denotes threshold ratios.
The parameters satisfy $\beta > 0$ and $\gamma > 0$: $\beta$ is the mass-action
transmission coefficient, $\gamma$ is the recovery rate, and the infection term
$\beta SI$ is proportional to the product of susceptible and infectious
compartment sizes.  Strict positivity models active transmission and active
recovery; degenerate boundary cases such as $\beta=0$ or $\gamma=0$ are outside
the artifact's stated scope.  The compartments are real-valued state variables
representing counts on an unnormalized scale; with count-valued compartments,
$\gamma$ has reciprocal-time units and $\beta$ has reciprocal population-time
units.  If the compartments are normalized to fractions, the population size is
absorbed into the transmission parameter.
Standard-incidence formulations such as $\beta SI/N$ are outside the present
artifact.  Isabelle types carry no physical units, so epidemiological units are
interpretive metadata; \Cref{sec:R0} uses the usual mass-action units only when
discussing dimensionless threshold quantities.

\subsection{The Locales \isa{SIR\_solution} and \isa{SIR\_ODE}}

We formalize solutions as a locale fixing $\beta$, $\gamma$, $S$, $I$, $R$,
$a$, $b$ and assuming:
\begin{itemize}
\item Positivity of parameters: $\beta > 0$, $\gamma > 0$.
\item A non-trivial interval: $a < b$.
\item The ODE system holds pointwise on $[a,b]$ (in full-neighbourhood
  \isa{has\_real\_derivative} form; see
  \Cref{sec:derivatives}), so the formal trajectory is treated as a
  real-valued function defined beyond the closed interval because Isabelle's
  \isa{at t} filter quantifies over punctured neighbourhoods in~$\R$.
\item Nonnegative \emph{initial conditions}: $S(a) \ge 0$, $I(a) \ge 0$, $R(a) \ge 0$.
\end{itemize}

Crucially, \isa{SIR\_solution} does \emph{not} assume nonnegativity of the full
trajectory.  Instead, we derive it from the initial conditions using the
framework's scalar sign-preservation lemmas (\Cref{cor:nonneg}).
The locale header is shown below as an ASCII-transliterated schematic of the
checked locale header, not as literal Isabelle source; \isa{!!} denotes
meta-level universal quantification, \isa{:} denotes membership, and \isa{==>}
denotes meta-level implication.  Only Isabelle symbols and line breaks are
simplified; the assumptions are unchanged.  The exact source header is in
\path{theories/SIR/SIR_Defs.thy}.
\begin{quote}
\small
\begin{verbatim}
locale SIR_solution =
  fixes beta gamma :: real
    and S I R :: "real => real"
    and a b :: real
  assumes pos_beta: "0 < beta"
    and pos_gamma: "0 < gamma"
    and interval: "a < b"
    and ode_S: "!!t. t : {a..b} ==>
      (S has_real_derivative (- beta * S t * I t)) (at t)"
    and ode_I: "!!t. t : {a..b} ==>
      (I has_real_derivative (beta * S t * I t - gamma * I t)) (at t)"
    and ode_R: "!!t. t : {a..b} ==>
      (R has_real_derivative (gamma * I t)) (at t)"
    and init_S: "0 <= S a"
    and init_I: "0 <= I a"
    and init_R: "0 <= R a"
\end{verbatim}
\end{quote}
Mathematically, \isa{SIR\_solution} is a strengthened Isabelle interface for a
classical SIR solution on the closed interval $[a,b]$: it records positive
parameters, the three differential equations in Isabelle's full-neighbourhood
derivative interface, and nonnegative initial data, but not nonnegativity of
later states.

The companion locale \isa{SIR\_ODE} fixes only $\beta$, $\gamma$, and the
initial values $S_0,I_0,\Rrecinit$ with $\beta,\gamma>0$ and nonnegative initial
data.

The construction lives in theory \isa{SIR\_Existence}, which supplies
the \isa{SIR\_ODE} locale used here: it defines the SIR vector field on
$\R^3$, proves it continuously differentiable, instantiates the AFP ODE flow
infrastructure, and defines scalar components of the unique local flow.
This infrastructure supplies a maximal existence interval and a unique local
flow for locally Lipschitz vector fields.
Uniqueness comes from the same AFP Picard--Lindel\"of/flow infrastructure:
the polynomial vector field is continuously differentiable, hence locally
Lipschitz on the finite-dimensional state space.

When a checked identifier contains the historical name \isa{R\_zero}, the paper
renders that quantity mathematically as $\Rinit$ unless the Isabelle name itself
is being cited.  The source identifier is retained for artifact traceability; its
epidemiological interpretation follows the prose notation rather than that
historical source identifier.

\subsection{Non-circular Globalization and Flow Bridge}
\label{sec:non-circular-globalization}

Global forward existence means that all nonnegative times belong to the AFP
flow's existence interval for the SIR initial state.  We prove it from
conservation and nonnegativity before any global qualitative theorem is
available.  \Cref{fig:proof-order} summarizes the proof dependencies.
For orientation, the proof obligations are: construct the AFP local flow; prove
sign and conservation for each already-existing local-flow prefix; use those
prefix facts to place the orbit in one compact simplex; invoke compact
continuation to get all forward times; and then package the scalar components as
a \isa{SIR\_solution} instance on every nontrivial finite forward interval
$[0,b]$ with $b>0$.

\paragraph{Strategy.}
The AFP infrastructure first gives a unique local flow for the polynomial vector
field, together with a maximal existence interval.  On each finite prefix that is
already known to lie in this interval, the flow-component ODE lemmas let us reuse
the scalar sign-preservation and conservation arguments.  They show that the
prefix remains in the compact simplex
\[
  \{(S,I,R)\in\R^3 \mid S\ge 0,\ I\ge 0,\ R\ge 0,\ S+I+R=S_0+I_0+\Rrecinit\}.
\]
Conservation and nonnegativity keep the trajectory inside this compact simplex,
ruling out finite-time escape.  This proof order avoids assuming global
nonnegativity in the proof of global existence.  The escape alternative is that
a finite right endpoint would force the trajectory eventually to leave every
compact subset of the domain.  The simplex is closed and bounded in
finite-dimensional Euclidean space, hence compact.  Geometrically, it
is the scaled 2-simplex in $\R^3$ obtained by intersecting the nonnegative
orthant with the hyperplane $S+I+R=S_0+I_0+\Rrecinit$.
In particular, with $N_0=S_0+I_0+\Rrecinit$, each compartment lies in $[0,N_0]$, so
the simplex is contained in the box $[0,N_0]^3$.

\paragraph{Existing-prefix invariant.}
For any existing forward time $t$, membership of $t$ in the maximal existence
set gives membership of every time in the prefix $[0,t]$, because that set is an
interval containing the initial time $0$.  Applying the scalar lemmas on that
prefix gives containment at $t$.
The case $t=0$ is immediate from the initial state; scalar-locale applications
use nondegenerate prefixes when $t>0$.
Concretely, \isa{sir\_flow\_\allowbreak component\_\allowbreak deriv} supplies the component ODEs for
the local AFP flow.  For an arbitrary forward time already known to lie in the
maximal interval, we restrict the flow to the finite interval ending at that
time.  The component nonnegativity lemmas
\isa{sir\_flow\_\allowbreak nonneg\_S}, \isa{sir\_flow\_\allowbreak nonneg\_I}, and
\isa{sir\_flow\_\allowbreak nonneg\_R}, together with \isa{sir\_flow\_\allowbreak conservation},
then establish \isa{sir\_flow\_\allowbreak in\_simplex} for that existing flow
point.

\paragraph{Compact continuation.}
Only after proving this existing-prefix invariant do we instantiate the AFP
theorem \isa{flow\_in\_\allowbreak compact\_\allowbreak right\_\allowbreak existence}.  This
compact-continuation principle says that a local flow whose forward orbit, at
every nonnegative time already in its maximal existence interval, remains in one
compact subset of the state space has no finite right endpoint:
\begin{center}
\small\isa{flow\_in\_compact\_right\_existence} $\Rightarrow$
\isa{sir\_global\_existence}.
\end{center}
We use the \isa{flow\_\allowbreak in\_\allowbreak compact\_\allowbreak right\_\allowbreak existence}
theorem from the AFP~2024 ODE formalization~\cite{afp-ode}, matching standard
ODE presentations such as Teschl's~\cite{teschl2012ode}.  In this SIR
instantiation, the vector field is polynomial on all of $\R^3$, so the compact
simplex automatically lies inside the ambient domain of the local-flow theorem.
Thus there is no boundary-of-domain obstruction in this application; the
continuation argument only has to rule out finite escape from compact sets.
This is the standard maximal-solution continuation criterion: containment of an
existing forward orbit in one compact subset of the open domain excludes a finite
right endpoint.
Equivalently, a finite right endpoint would force the existing orbit to leave
every compact subset of the open domain as the endpoint is approached.
The right-maximal interval and local-flow premises are supplied by the AFP flow
locale; the SIR proof only supplies the compact witness and containment premises
listed below.
The SIR-specific work is not the continuation principle itself, but constructing
the compact simplex witness from sign and conservation facts already proved for
existing local-flow prefixes.  In the SIR instantiation, the ambient domain is
the open set $U=\R^3$ for the polynomial vector field, $x(t)$ is the AFP local
flow from the initial state, $J$ is its maximal forward existence interval with
$0\in J$, and the continuation fact has the following paper-level shape:
\[
\Bigl(K\subseteq U\text{ is compact}\;\wedge\;
  \forall t\ge0.\;(t\in J \Rightarrow x(t)\in K)\Bigr)
\quad\Longrightarrow\quad
[0,\infty)\subseteq J .
\]
This display is a prose specialization of
\isa{flow\_in\_\allowbreak compact\_\allowbreak right\_\allowbreak existence}, not a
verbatim transcription of the AFP theorem.
Here $J$ corresponds to \isa{sir\_c1.existence\_ivl0}, the AFP existence interval
for the local flow from the initial SIR state.
Thus the premise $t\in J$ is exactly the condition that $t$ is an
already-existing forward time of the local flow.
As used here, $J$ is the AFP maximal existence interval for that local flow; the
AFP flow locale supplies its interval, openness, and maximality facts, and the
SIR argument uses the forward part from the initial time.
In the checked invocation, the AFP flow locale supplies the local-flow,
uniqueness, maximal-interval, openness, and continuity premises; the SIR proof
discharges the compactness of the simplex, containment of every existing forward
flow point in that simplex, the ambient-domain inclusions, and the requested
$t\ge0$.
The premise discharge can be read from the following local map:
\begin{center}
\small
\begin{tabular}{@{}>{\raggedright\arraybackslash}p{0.36\linewidth}
                >{\raggedright\arraybackslash}p{0.52\linewidth}@{}}
\toprule
\textbf{Continuation ingredient} & \textbf{SIR discharge} \\
\midrule
Open ambient domain $U$ & $U=\R^3$ for the polynomial SIR vector field. \\
Local flow and maximal interval $J$ &
AFP flow locale \isa{sir\_c1} and interval \isa{sir\_c1.existence\_ivl0}. \\
Compact witness $K\subseteq U$ &
The simplex \isa{sir\_simplex (x0\$1 + x0\$2 + x0\$3)} and
\isa{sir\_simplex\_compact}. \\
Forward-orbit containment in $K$ &
\isa{sir\_flow\_in\_simplex}, proved from local-flow sign facts and conservation. \\
All-forward-times conclusion &
\isa{sir\_global\_existence}. \\
\bottomrule
\end{tabular}
\end{center}
In ordinary mathematical terms, the invocation applies the continuation
criterion to the already-built local flow using the conserved nonnegative
simplex as the compact set containing every existing forward prefix.
That simplex depends only on the initial total population, not on the chosen
prefix endpoint.
The unabridged checked proof is the theorem \isa{sir\_global\_existence} in
\path{work/SIR_Existence.thy}.  Its central Isabelle invocation is short; below
is a complete, lightly transliterated excerpt, with Isabelle's symbolic escapes
rendered as ASCII.  In Isabelle's vector syntax, \isatt{x0\$1}, \isatt{x0\$2},
and \isatt{x0\$3} are the first, second, and third components of the initial
vector \isatt{x0}; \isatt{in} denotes membership.  Here \isa{sir\_c1} names the
AFP flow-locale instance for the SIR vector field:
\begin{quote}
\footnotesize
\begin{verbatim}
theorem sir_global_existence:
  assumes "0 < beta" "0 < gamma"
    and nonneg: "0 <= x0$1" "0 <= x0$2" "0 <= x0$3"
    and "0 <= t"
  shows "t in sir_c1.existence_ivl0 beta gamma x0"
proof (rule sir_c1.flow_in_compact_right_existence
       [of beta gamma x0 "sir_simplex (x0$1 + x0$2 + x0$3)"])
  fix s assume "0 <= s" "s in sir_c1.existence_ivl0 beta gamma x0"
  then show "sir_c1.flow0 beta gamma x0 s
    in sir_simplex (x0$1 + x0$2 + x0$3)"
    using sir_flow_in_simplex[OF assms(1-5)] by auto
next
  show "compact (sir_simplex (x0$1 + x0$2 + x0$3))"
    using nonneg by (intro sir_simplex_compact) linarith
next
  show "sir_simplex (x0$1 + x0$2 + x0$3) <= UNIV" by simp
next
  show "x0 in (UNIV :: (real^3) set)" by simp
next
  show "0 <= t" by fact
qed
\end{verbatim}
\end{quote}
The displayed obligations match the preceding table: compact containment for
each existing forward time, compactness of the simplex, routine domain
inclusions, and the requested nonnegative target time.  The SIR-specific witness
is the compact simplex and the already-proved containment of every existing
forward flow point in it.  In the excerpt, \isa{assms(1-5)} are precisely the
parameter-positivity and initial-nonnegativity assumptions, excluding the target
time assumption \isa{0 <= t}.
Concretely, \isa{sir\_flow\_\allowbreak in\_\allowbreak simplex} states that any existing
forward point of \isa{sir\_c1.flow0} lies in
\isa{sir\_simplex (x0\$1 + x0\$2 + x0\$3)}, and its proof combines the three
flow nonnegativity lemmas with \isa{sir\_flow\_\allowbreak conservation}; the compact
witness itself is discharged by \isa{sir\_simplex\_\allowbreak compact}.
This is the formal theorem corresponding to the escape-alternative principle
used in the strategy paragraph.

\paragraph{Flow bridge.}
The simplex is the single compact witness containing every existing forward flow
point.  \isa{SIR\_ODE.\allowbreak global\_\allowbreak existence}
records the resulting all-forward-times theorem in the SIR locale.  After that,
every interval $[0,b]$ with $b>0$ lies in the existence interval, and
\isa{SIR\_ODE.\allowbreak sir\_\allowbreak solution\_on\_\allowbreak interval} packages the scalar
components as an instance of \isa{SIR\_solution} on $[0,b]$.
Here ``flow bridge'' is paper shorthand for this packaging theorem and proof
pattern, not a separate Isabelle locale or definition.

\begin{figure}[ht]
\centering
\begin{tikzpicture}[
  >=Stealth,
  node distance=0.75cm,
  proofstep/.style={draw, rounded corners, align=center, text width=5.1cm, minimum height=0.72cm}
]
\node[proofstep] (local) {AFP local\\flow};
\node[proofstep, below=of local] (components) {Flow-component ODE lemmas};
\node[proofstep, below=of components] (qual) {Sign facts and conservation};
\node[proofstep, below=of qual] (global) {Global flow via compactness};
\node[proofstep, below=of global] (scalar) {Scalar-locale bridge and uniqueness};
\draw[->] (local) -- (components);
\draw[->] (components) -- (qual);
\draw[->] (qual) -- (global);
\draw[->] (global) -- (scalar);
\end{tikzpicture}
\caption{Proof order for avoiding circularity.  Arrows denote proof dependency:
sign facts are first proved for the local flow, then used with conservation to
justify the global flow; the scalar-locale bridge comes after global existence.}
\label{fig:proof-order}
\end{figure}
\FloatBarrier

\begin{theorem}[Flow bridge and uniqueness]
\label{thm:flow-bridge}
Assume $\beta,\gamma>0$ and $S_0,I_0,\Rrecinit\ge 0$.  For every $b>0$ (only
because \isa{SIR\_solution} requires a nontrivial interval), the scalar
components of the AFP flow generated by the SIR vector field from
$(S_0,I_0,\Rrecinit)$ instantiate \isa{SIR\_solution} on $[0,b]$:
\[
  \text{\isa{SIR\_solution}}
  (\beta,\gamma,S_{\mathrm{flow}},I_{\mathrm{flow}},R_{\mathrm{flow}},0,b).
\]

\medskip
\noindent\emph{Uniqueness.}  Let
$S^\ast,I^\ast,R^\ast:\R\to\R$ be total functions satisfying
\Cref{eq:sir-s,eq:sir-i,eq:sir-r} on $[0,b]$ in the same full-neighbourhood
derivative convention and with the same initial conditions
$S^\ast(0)=S_0$, $I^\ast(0)=I_0$, and
$R^\ast(0)=\Rrecinit$.  Then this trajectory agrees pointwise with those scalar
components on $[0,b]$.  This is not a uniqueness claim for weaker
Carath\'eodory or endpoint-only solution notions.
\end{theorem}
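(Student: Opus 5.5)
The plan splits into two parts: the instantiation claim and the uniqueness claim. Both rest on \isa{sir\_global\_existence}, so I start by fixing $b>0$ and using that theorem to show every $t\in[0,b]$ lies in the maximal existence interval $J=\isa{sir\_c1.existence\_ivl0}$.

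\emph{Instantiation.} I discharge the \isa{SIR\_solution} assumptions one by one. Positivity of $\beta,\gamma$ and $0<b$ are hypotheses. The initial conditions follow because the AFP flow at time $0$ equals the initial state, so $S_{\mathrm{flow}}(0)=S_0\ge0$, and likewise for $I_{\mathrm{flow}}$ and $R_{\mathrm{flow}}$.

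The substantive obligation is the full-neighbourhood derivative at every $t\in[0,b]$, including both endpoints. Here I use that $J$ is open (supplied by the AFP flow locale). Since $[0,b]\subseteq J$, each point of $[0,b]$, in particular $0$ and $b$, has an open neighbourhood inside $J$. The AFP flow has derivative $f(\mathrm{flow}(t))$ at each $t\in J$ in the unrestricted \isa{at t} filter, because \isa{at t within J = at t} for $t$ in an open set. Projecting this vector derivative onto coordinates $1,2,3$, via the linearity of the component map, gives the three scalar ODEs; this is what \isa{sir\_flow\_component\_deriv} packages. With this, the locale interpretation closes.

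\emph{Uniqueness.} I bundle $(S^\ast,I^\ast,R^\ast)$ into a vector-valued function $y:\R\to\R^3$. Componentwise full-neighbourhood derivatives assemble into a vector derivative of $y$ at each $t\in[0,b]$, with value equal to the SIR vector field at $y(t)$. Restricting to the within-$[0,b]$ filter, $y$ solves the IVP on $[0,b]$ with $y(0)=x_0$. I then invoke the AFP maximal-existence uniqueness property: any solution on an interval containing $0$ and starting at $x_0$ has its interval contained in $J$ and agrees with \isa{flow0} there. The hypotheses for this are that the vector field is locally Lipschitz, since it is $C^1$ as already shown, and that the domain is $\R^3$. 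Taking components then yields pointwise agreement on $[0,b]$.

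I expect the main obstacle to be this interface matching in the uniqueness step. The AFP uniqueness lemma is phrased with \isa{solves\_ode} on an interval using within-derivatives and a domain predicate. So I must (i) convert the three scalar \isa{has\_real\_derivative} facts into a single \isa{has\_vector\_derivative} for a \isatt{real\textasciicircum3}-valued function, which I would do by a vector-derivative-by-components lemma over the \isa{Basis} of $\R^3$, and (ii) weaken \isa{at t} to \isa{at t within \{0..b\}}. Both directions are routine in principle but fiddly about \isatt{vec\_nth} indices. As a fallback, if the vector-derivative assembly is awkward, I would instead show directly that the difference of the two trajectories is constant on $[0,b]$ using a Gr\"onwall-type bound. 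Both trajectories remain in the bounded simplex, so the field is Lipschitz there. I prefer the AFP route, since it keeps the proof short and reuses the same locale instance \isa{sir\_c1} used for global existence.
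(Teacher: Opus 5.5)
Your proposal is correct and follows essentially the paper's route. \isa{sir\_global\_existence} together with openness of the AFP existence interval gives the full-neighbourhood component ODEs on all of $[0,b]$, endpoints included, for the \isa{SIR\_solution} interpretation; uniqueness comes from assembling the scalar triple into a vector trajectory with a \isa{has\_vector\_derivative} for the SIR field and invoking AFP local uniqueness against the flow, so the Gr\"onwall fallback (which would show the difference vanishes, not that it is constant) is unnecessary.
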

In the checked uniqueness proof, the compared scalar triple is converted to the
vector trajectory $t\mapsto(S^\ast(t),I^\ast(t),R^\ast(t))$; the
full-neighbourhood derivative assumptions on $[0,b]$ supply the vector-ODE
premises required by AFP local uniqueness.
Concretely, the proof defines a vector function with these three scalar
components, converts each scalar \isa{has\_real\_derivative} assumption to a
one-dimensional \isa{has\_derivative} fact on the interval, and combines the
three component derivatives to obtain the required
\isa{has\_vector\_derivative} statement for the SIR vector field.  The equal
initial conditions give the same initial vector state as the AFP flow.
This uniqueness statement is intentionally scoped to the same regularity
convention as \isa{SIR\_solution}.

\begin{proof}[Proof sketch]
The preceding construction proves the theorem in the checked artifact.  AFP
first supplies the local flow and its component ODE lemmas.  On each already
existing forward prefix, the scalar nonnegativity and conservation facts put the
flow point in the compact simplex displayed above.  The AFP compact-continuation
theorem then gives all forward times, and the scalar components of that global
flow instantiate \isa{SIR\_solution} on every nontrivial interval $[0,b]$.
The premise discharge is: the polynomial vector field and AFP flow locale give
the local-flow, continuity, openness, and maximal-interval premises; the simplex
is the compact set; \isa{sir\_flow\_in\_simplex} gives compact containment for
every existing forward time; and global existence supplies the interval
membership needed to interpret \isa{SIR\_solution}.
Uniqueness follows by packaging any compared scalar triple as a vector-valued
trajectory and applying the AFP Picard--Lindel\"of uniqueness theorem for the
locally Lipschitz SIR vector field with the same initial state and component
ODEs.
\end{proof}

This corresponds to \isa{SIR\_ODE.sir\_solution\_on\_interval} and
\isa{SIR\_ODE.sir\_uniqueness}.  The case $b=0$ is excluded only because
\isa{SIR\_solution} is a nontrivial-interval locale.

\subsection{Formalization Architecture}

The Isabelle artifact separates generic calculus facts, SIR-specific
qualitative reasoning, and the AFP flow construction.  \Cref{sec:artifact}
records the exact commit, build command, validation environment, and artifact
statistics.

\begin{table}[ht]
\centering
\small
\begin{tabular}{>{\raggedright\arraybackslash}p{0.34\linewidth}
                >{\raggedright\arraybackslash}p{0.56\linewidth}}
\toprule
\textbf{Theory area} & \textbf{Role} \\
\midrule
Framework &
\path{theories/Framework/Compartmental_Model.thy}: integrating-factor,
sign-preservation, monotonicity, and three-compartment conservation lemmas. \\
SIR qualitative theories &
\path{theories/SIR/SIR_*.thy}: scalar SIR locale, forward invariance, conservation,
monotonicity, phase-plane invariant, threshold, stationary-infection, and
boundedness theorems. \\
Existence bridge &
\path{work/SIR_Existence.thy}: checked session source for the polynomial vector field,
Picard--Lindel\"of/flow instantiation, global forward existence,
scalar-flow bridge, and uniqueness theorem. \\
\bottomrule
\end{tabular}
\caption{Artifact organization.  The SIR theories reuse the generic framework;
the existence theory connects those qualitative theorems to the unique AFP
flow.}
\label{tab:artifact-architecture}
\end{table}

Readers interested only in the mathematical content may read the displayed
theorem statements and skip the Isabelle-name traceability table.
The displayed theorem statements below are prose-facing summaries.  When they
omit Isabelle syntax or locale packaging, they do not intentionally omit
mathematical side conditions; the exact checked statements live in the following
Isabelle constants and source files:
\begin{table}[ht]
\centering
\footnotesize
\begin{tabular}{>{\raggedright\arraybackslash}p{0.21\linewidth}
                >{\raggedright\arraybackslash}p{0.38\linewidth}
                >{\raggedright\arraybackslash}p{0.31\linewidth}}
\toprule
\textbf{Paper result} & \textbf{Isabelle names} & \textbf{Source file(s)} \\
\midrule
Integrating factor and sign preservation &
\isa{linear\_ode\_\allowbreak solution}, \isa{linear\_ode\_\allowbreak nonneg},
\isa{linear\_ode\_\allowbreak pos} &
\path{theories/Framework/}\newline\path{Compartmental_Model.thy} \\
Derivative sign and conservation framework &
\isa{nonneg\_deriv\_\allowbreak nonneg},
\isa{nonincreasing\_from\_\allowbreak nonpos\_derivative},
\isa{three\_compartment\_\allowbreak conservation} &
\path{theories/Framework/}\newline\path{Compartmental_Model.thy} \\
Global SIR flow and bridge &
\isa{sir\_global\_\allowbreak existence},
\isa{SIR\_ODE.sir\_\allowbreak solution\_on\_\allowbreak interval},
\isa{SIR\_ODE.sir\_\allowbreak uniqueness} &
\path{work/SIR_Existence.thy} \\
Forward invariance and integral representations &
\isa{SIR\_solution.I\_\allowbreak nonneg}, \isa{SIR\_solution.S\_\allowbreak nonneg},
\isa{SIR\_solution.R\_\allowbreak nonneg},
\isa{SIR\_solution.I\_\allowbreak solution}, \isa{SIR\_solution.S\_\allowbreak solution} &
\path{theories/SIR/}\newline\path{SIR_Forward_Invariance.thy} \\
Conservation, monotonicity, and bounds &
\isa{SIR\_solution.conservation},
\isa{SIR\_solution.S\_\allowbreak nonincreasing},
\isa{SIR\_solution.R\_\allowbreak nondecreasing},
\isa{SIR\_solution.compartment\_\allowbreak bounds} &
\path{theories/SIR/}\newline
\path{SIR_Conservation.thy},
\path{SIR_Monotonicity.thy},
\path{SIR_Invariant.thy} \\
Phase-plane invariant and threshold facts &
\isa{SIR\_solution.KM\_\allowbreak invariant\_value},
\isa{SIR\_solution.epidemic\_\allowbreak growth\_R\_eff},
\isa{SIR\_solution.I\_\allowbreak nonincreasing\_if\_\allowbreak R\_zero\_le\_one},
\isa{SIR\_solution.peak\_\allowbreak iff} &
\path{theories/SIR/}\newline
\path{SIR_Phase_Plane.thy},
\path{SIR_Threshold.thy},
\path{SIR_Peak.thy} \\
\bottomrule
\end{tabular}
\caption{Traceability from paper-level result families to checked Isabelle names
and source files.  Paths are relative to the artifact repository root.
Locale-qualified names are theorems in \isa{SIR\_solution}; the flow row is the
\isa{SIR\_ODE} transfer layer.  Names containing \isa{R\_zero} are source
identifiers only; the paper-level quantity is $\Rinit$.}
\label{tab:theorem-traceability}
\end{table}
\FloatBarrier

\subsection{Forward Invariance}
\label{sec:forward-invariance}

By \Cref{thm:flow-bridge}, the scalar facts in this subsection apply to the
unique global SIR flow, not only to an assumed trajectory in the
\isa{SIR\_solution} locale.

The key observation is that the $I$ and $S$ equations are each homogeneous
linear in their own compartment, while $R$ has nonnegative derivative:
\begin{itemize}
\item $I' = (\beta S(t) - \gamma) \cdot I$, with coefficient $f_I(t) = \beta S(t) - \gamma$.
\item $S' = (-\beta I(t)) \cdot S$, with coefficient $f_S(t) = -\beta I(t)$.
\item $R' = \gamma I(t)$, which is nonnegative once $I \ge 0$ is established.
  This uses \Cref{lem:nonneg-deriv}, not the integrating factor.
\end{itemize}
In the checked source, names ending in \isa{nonneg} record these componentwise
forward-invariance facts, such as $I(t)\ge0$ on the interval.

\begin{theorem}[\isa{S\_nonneg}, \isa{I\_nonneg}, \isa{R\_nonneg}]
\label{thm:forward-inv}
In the \isa{SIR\_solution} locale, under its parameter, ODE, interval, and
nonnegative-initial-condition assumptions, for all $t \in [a,b]$:
$S(t) \ge 0$, $I(t) \ge 0$, and $R(t) \ge 0$.
\end{theorem}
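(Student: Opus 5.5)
The plan is to prove the three sign facts separately in the order $I$, $S$, $R$, using the framework results of \Cref{sec:framework} with coefficients read off from the locale's ODE hypotheses. The order matters only for $R$, whose argument needs $I\ge0$ first.

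For $I$, I will apply \Cref{cor:nonneg} with $X=I$ and $f_I(t)=\beta S(t)-\gamma$.
\begin{itemize}
\item The derivative hypothesis of \Cref{thm:linear-ode} follows from \isa{ode\_I} by rewriting $\beta S(t)I(t)-\gamma I(t)=(\beta S(t)-\gamma)\cdot I(t)$, a pure ring identity under the derivative.
\item Continuity of $f_I$ on $[a,b]$ holds because \isa{ode\_S} gives $S$ a full-neighbourhood derivative at every point of $[a,b]$. Hence $S$ is continuous there, via \isa{DERIV\_atLeastAtMost\_imp\_continuous\_on} or pointwise \isa{DERIV\_continuous}, and $f_I$ is an affine image of $S$.
\item With $a<b$ from the locale and $I(a)\ge0$ from \isa{init\_I}, \Cref{cor:nonneg} gives $I(t)\ge0$ on $[a,b]$.
\end{itemize}

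For $S$, the argument is symmetric: take $X=S$ and $f_S(t)=-\beta I(t)$.
\begin{itemize}
\item Rewrite $-\beta S(t)I(t)=(-\beta I(t))\cdot S(t)$.
\item Continuity of $f_S$ comes from \isa{ode\_I} in the same way as above.
\item Conclude with \isa{init\_S} and \Cref{cor:nonneg}.
\end{itemize}
Neither step uses the sign of the other compartment. This is exactly why the integrating-factor route avoids circularity: the representation $X(t)=X(a)\exp(F(t))$ holds for any continuous coefficient, whatever its sign.

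For $R$, I will apply \Cref{lem:nonneg-deriv} with $X=R$ and $g(t)=\gamma I(t)$.
\begin{itemize}
\item The derivative premise is \isa{ode\_R} verbatim.
\item The sign premise $g(t)\ge0$ on $[a,b]$ follows from \isa{pos\_gamma} and the already proved $I(t)\ge0$.
\item The initial condition is \isa{init\_R}.
\end{itemize}

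I expect the main obstacle to be the interface work rather than the mathematics. \Cref{thm:linear-ode} expects the derivative value literally in the form $f(t)\cdot X(t)$, so each ODE hypothesis must be restated with that exact shape, using \isa{DERIV\_cong} or an algebraic \isa{subst}. Continuity of the coefficient must also be discharged from the companion compartment's derivative facts on the closed interval. I will first package two auxiliary facts, continuity of $S$ and of $I$ on $[a,b]$, and then instantiate the corollary explicitly with the $f$ and $X$ above, rather than relying on unification.
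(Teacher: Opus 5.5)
Your proposal is correct and matches the paper's proof essentially step for step. You handle $I$ and $S$ via \isa{linear\_ode\_nonneg} with coefficients $\beta S(t)-\gamma$ and $-\beta I(t)$, whose continuity comes only from differentiability of the companion compartment, and then $R$ via \isa{nonneg\_deriv\_nonneg} using $\gamma>0$ and the already-proved $I\ge 0$; your remarks on rewriting each derivative value into the literal form $f(t)\cdot X(t)$ are implementation detail the paper leaves implicit.
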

Equivalently, for the SIR state triple, these componentwise inequalities state
forward invariance of the nonnegative orthant on the solution interval.

\noindent\emph{Remark.}
The nonnegativity proofs for $S$ and $I$ are logically independent: each uses
only differentiability, hence continuity, of the other compartment.  Neither
requires the other compartment's nonnegativity.

\begin{proof}
We first prove nonnegativity of $I$ and $S$; these two sign-preservation
arguments are order-independent because each needs only continuity of the other
compartment, not its nonnegativity.  We present $I$ first because the same
integrating-factor pattern is then reused for $S$.  The integrating-factor theorem
(\Cref{thm:linear-ode}) requires only continuity of the coefficient function
$f$ on $[a,b]$, not a sign condition on $f$.  We then prove nonnegativity of
$R$, whose monotonicity proof uses the already-proved nonnegativity of $I$.

For $I$: the ODE $I' = (\beta S - \gamma) I$ satisfies the hypotheses of
\isa{linear\_ode\_nonneg} with $f(t) = \beta S(t) - \gamma$.
Continuity of $f$ follows because differentiability of $S$ on $[a,b]$ implies
continuity there; the artifact records the resulting fact as
\isa{continuous\_S}.
No nonnegativity of $S$ is needed here; only continuity derived from differentiability.
With $I(a) \ge 0$, we get $I(t) \ge 0$.

For $S$: the ODE $S' = (-\beta I) \cdot S$ satisfies the hypotheses with
$f(t) = -\beta I(t)$, which is continuous since $I$ is continuous by
differentiability, not because $I$ has already been proved nonnegative.  With
$S(a) \ge 0$, we get $S(t) \ge 0$.

For $R$: since $I(t) \ge 0$ (just proved) and $\gamma > 0$,
we have $R'(t) = \gamma I(t) \ge 0$. With $R(a) \ge 0$, the
nondecreasing-function wrapper \isa{nonneg\_deriv\_nonneg}
(\Cref{lem:nonneg-deriv}) gives $R(t) \ge 0$.
\end{proof}

\medskip
Within \isa{SIR\_solution}, whose locale assumptions already include the
nonnegative initial conditions, the combined theorem
\isa{forward\_invariance} packages the three conclusions:
\[
  \forall t \in [a,b].\quad S(t) \ge 0,\quad I(t) \ge 0,\quad R(t) \ge 0.
\]

Unless stated otherwise, the remaining SIR theorems below are stated in the
\isa{SIR\_solution} locale and use its interval, ODE, positive-parameter, and
nonnegative-initial-value assumptions, with any additional hypotheses stated locally.

\subsection{Integrating-Factor Integral Representations}

In \isa{SIR\_solution}, the integrating factor proofs also yield integral
representations for any supplied SIR solution:

\begin{theorem}[\isa{I\_solution}]
\label{thm:i-s-solutions}
For all $t \in [a,b]$:
\[
  I(t) = I(a) \cdot \exp\!\left(\int_a^t (\beta S(s) - \gamma)\, ds\right)
\]
\end{theorem}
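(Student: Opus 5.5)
The plan is to obtain \Cref{thm:i-s-solutions} as a direct instance of the integrating-factor theorem \Cref{thm:linear-ode}, with $X = I$ and coefficient $f(t) = \beta S(t) - \gamma$. Since the theorem is stated inside the \isa{SIR\_solution} locale, its hypotheses $a<b$, the ODE assumption \isa{ode\_I}, and $\beta,\gamma>0$ are all available. Only two side conditions of \Cref{thm:linear-ode} have to be discharged; after that its conclusion \eqref{eq:linear-ode-representation} is literally the displayed representation once $F(t)=\int_a^t(\beta S(s)-\gamma)\,ds$ is unfolded.

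The first side condition is the derivative hypothesis. For each $t\in[a,b]$ I need
\[
  (I\ \text{\isa{has\_real\_derivative}}\ (\beta S(t)-\gamma)\cdot I(t))\ \text{\isa{at}}\ t .
\]
\isa{ode\_I} supplies the derivative value $\beta S(t) I(t) - \gamma I(t)$ in the same full-neighbourhood convention. I will rewrite this value by the ring identity $\beta S(t)I(t)-\gamma I(t) = (\beta S(t)-\gamma)\,I(t)$, for example by substituting the equality into the derivative fact. No endpoint conversion is needed, because the locale already states the ODE with \isa{at t} on all of $[a,b]$.

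The second side condition is continuity of $f$ on $[a,b]$. From \isa{ode\_S}, $S$ is differentiable at every point of $[a,b]$, so \isa{DERIV\_atLeastAtMost\_imp\_continuous\_on} gives continuity of $S$ on $[a,b]$; this is the locale fact \isa{continuous\_S} already used in the proof of \Cref{thm:forward-inv}. Continuity of $t\mapsto\beta S(t)-\gamma$ then follows by closure of continuity under scalar multiplication and subtraction of a constant. As the remark after \Cref{thm:forward-inv} stresses, no sign information on $S$ is required, so the result does not depend on nonnegativity at all.

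I expect the main obstacle to be purely interface-level: making the instantiated conclusion of \isa{linear\_ode\_solution} syntactically match the stated form. In particular, the coefficient must be presented as the lambda term $\lambda s.\ \beta S(s)-\gamma$, so that the integral appearing in $F$ is exactly $\int_a^t(\beta S(s)-\gamma)\,ds$. I would first try instantiating the theorem explicitly with that lambda term, and avoid letting unification guess it from the product form of the derivative. Failing that, I would state the rewritten derivative fact as an intermediate \isa{have} before applying the rule.
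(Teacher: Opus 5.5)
Your proposal is correct and matches the paper's route. The paper likewise obtains \isa{I\_solution} by instantiating \isa{linear\_ode\_solution} (\Cref{thm:linear-ode}) with $X=I$ and coefficient $f(t)=\beta S(t)-\gamma$, using continuity of $S$ (\isa{continuous\_S}) for the coefficient and the factorization $\beta S I-\gamma I=(\beta S-\gamma)I$ of \isa{ode\_I} for the derivative hypothesis, with your explicit lambda-instantiation remarks being reasonable proof-engineering detail rather than a different argument.
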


\begin{theorem}[\isa{S\_solution}]
For all $t \in [a,b]$:
\[
  S(t) = S(a) \cdot \exp\!\left(\int_a^t (-\beta \cdot I(s))\, ds\right)
\]
\end{theorem}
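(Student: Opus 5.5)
The plan is to obtain this as a direct instance of \Cref{thm:linear-ode}, taking $X=S$ and coefficient $f(t)=-\beta I(t)$. Both $a<b$ and the hypotheses are supplied by the \isa{SIR\_solution} locale, so the argument reduces to checking the two premises of the integrating-factor theorem. No sign information about $I$ or $S$ is needed, exactly as in the proof of \Cref{thm:forward-inv}.

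\emph{First premise: continuity of the coefficient.} I need $t\mapsto -\beta I(t)$ to be continuous on $[a,b]$. The locale assumption \isa{ode\_I} gives a full-neighbourhood derivative of $I$ at every $t\in[a,b]$. Then \isa{DERIV\_atLeastAtMost\_imp\_continuous\_on} gives continuity of $I$ on $[a,b]$; this is the companion to the fact \isa{continuous\_S} already used for $I$. Continuity of the scalar multiple by $-\beta$ then follows from the standard continuity rules.

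\emph{Second premise: the derivative in the required shape.} For each $t\in[a,b]$, \isa{ode\_S} gives $S'(t) = -\beta S(t) I(t)$. \Cref{thm:linear-ode}, however, asks for the derivative value in the form $f(t)\cdot X(t) = (-\beta I(t))\cdot S(t)$. I would rewrite the derivative value by the ring identity $-\beta S(t)I(t) = (-\beta I(t))\, S(t)$, using the fact that \isa{has\_real\_derivative} is invariant under equal derivative values. With both premises in place, \Cref{thm:linear-ode} yields
\[
S(t)=S(a)\exp\Bigl(\int_a^t(-\beta I(s))\,ds\Bigr)
\]
for every $t\in[a,b]$, which is the claim.

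The only real obstacle is syntactic: the derivative expression and the integrand must match the instantiation of $f$ exactly. I therefore expect the main care point to be stating $f$ as the lambda term $\lambda s.\,-\beta\cdot I(s)$, so that the integral in the conclusion literally coincides with the one displayed in the theorem. I would also supply the commuted derivative fact explicitly rather than rely on automation.
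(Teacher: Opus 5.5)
Your proposal is correct and takes essentially the paper's route. The paper also obtains \isa{S\_solution} by instantiating \Cref{thm:linear-ode} with $X=S$ and $f(t)=-\beta I(t)$, using only continuity of $I$ derived from differentiability (no sign information) and rewriting the \isa{ode\_S} derivative value into the form $f(t)\cdot S(t)$.
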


Both integral representations instantiate \Cref{thm:linear-ode} with the
coefficient functions shown in the exponents.
These are not closed-form solutions because the integrands still contain the
unknown trajectory.  They are nevertheless sufficient for the
machine-checked sign arguments: the exponential factor is always positive, so
the sign is determined by the initial value.
Epidemiologically, the $S$ representation expresses susceptible depletion as an
exponential function of the cumulative infectious load
\[
  P_I(t)=\int_a^t I(s)\,ds.
\]
The $I$ representation analogously expresses infectious growth or decline
through the cumulative excess growth rate
\[
  G_I(t)=\int_a^t(\beta S(s)-\gamma)\,ds.
\]

\subsection{Conservation of Total Population}

\begin{theorem}[\isa{conservation}]
Define $N = S(a) + I(a) + R(a)$.  For all $t \in [a,b]$:
\[
  S(t) + I(t) + R(t) = N
\]
\end{theorem}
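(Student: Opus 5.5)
The plan is to obtain conservation as a direct instance of \Cref{thm:conservation} (\isa{three\_compartment\_conservation}). I will instantiate it with $f=S$, $g=I$, $h=R$ and derivative functions
\[
  D_f(s)=-\beta S(s)I(s),\qquad D_g(s)=\beta S(s)I(s)-\gamma I(s),\qquad D_h(s)=\gamma I(s).
\]
The interval premise $a<b$ comes directly from the locale assumption \isa{interval}. The three full-neighbourhood derivative premises for $s\in[a,b]$ are exactly the locale assumptions \isa{ode\_S}, \isa{ode\_I}, and \isa{ode\_R}. These are stated in the same \isa{has\_real\_derivative}\,\ldots\,\isa{(at t)} convention that \Cref{thm:linear-ode} and \Cref{thm:conservation} use, so no filter conversion should be needed.

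The remaining obligation is the zero-sum condition
\[
  -\beta S(s)I(s)+\bigl(\beta S(s)I(s)-\gamma I(s)\bigr)+\gamma I(s)=0
\]
for every $s\in[a,b]$. This is a ring identity, and I would discharge it by algebraic simplification. Notably, it needs no sign information at all, so the proof does not depend on \Cref{thm:forward-inv}.

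The conclusion of \Cref{thm:conservation} is $S(t)+I(t)+R(t)=S(a)+I(a)+R(a)$ for $t\in[a,b]$. Unfolding the definition of $N$ then gives the statement.

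The only step I expect to need care is the matching one in Isabelle. The derivative functions $D_f,D_g,D_h$ must be supplied explicitly as lambda terms so that higher-order unification lines them up syntactically with the locale's ODE assumptions. If unification fails because of the shape of the product terms (for instance $-\beta\cdot S\,t\cdot I\,t$ versus $-(\beta\cdot S\,t\cdot I\,t)$), I would first instantiate the theorem with \isa{of} using exactly the locale's expressions. Failing that, I would restate the derivative facts via \isa{DERIV\_cong} before applying the rule.
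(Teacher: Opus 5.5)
Your proposal is correct and matches the paper's proof: instantiate \isa{three\_compartment\_conservation} with $f=S$, $g=I$, $h=R$, take the SIR right-hand sides as derivative witnesses, and discharge the zero-sum premise by the algebraic identity. Your remark that no sign information is needed is also right; continuity comes from differentiability inside the framework theorem.
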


\begin{proof}
Instantiate \Cref{thm:conservation} with $f = S$, $g = I$, and $h = R$.
The derivative witnesses are the three SIR right-hand sides.  Their zero-sum
condition is the algebraic identity
$(-\beta SI) + (\beta SI - \gamma I) + (\gamma I) = 0$, so the derivative of
$S+I+R$ is zero on the interval.  Differentiability supplies the continuity
premise for \isa{DERIV\_isconst2}, yielding constancy of the total population.
This is elementary on paper, but in the artifact it is the named zero-sum
derivative fact later reused by the compact-continuation and boundedness proofs.
\end{proof}

\subsection{Monotonicity}

\begin{theorem}[\isa{S\_nonincreasing}]
For $s, t \in [a,b]$ with $s \le t$: $S(t) \le S(s)$.
\end{theorem}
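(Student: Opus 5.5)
The plan is to apply the monotonicity wrapper \isa{nonincreasing\_from\_nonpos\_derivative} to $f=S$ on the subinterval $[s,t]$, with derivative witness $D(u)=-\beta S(u)I(u)$. The steps are as follows.

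First, I discharge the interval bookkeeping. From $s,t\in[a,b]$ and $s\le t$, every $u\in[s,t]$ lies in $[a,b]$. The locale assumption \isa{ode\_S} then supplies the full-neighbourhood derivative $(S\ \isa{has\_real\_derivative}\ {-\beta S(u)I(u)})\ \isa{at}\ u$ at each such $u$. No conversion between within-interval and full derivatives is needed, because the locale already states the ODE in the \isa{at}~$u$ form. The wrapper therefore applies directly, including in the degenerate case $s=t$, where its conclusion is reflexive.

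Second, I prove the sign premise $D(u)\le 0$. By \Cref{thm:forward-inv}, we have $S(u)\ge 0$ and $I(u)\ge 0$ for all $u\in[a,b]$. Together with the locale assumption $\beta>0$, this gives $\beta S(u)I(u)\ge 0$ as a product of nonnegative factors, so $-\beta S(u)I(u)\le 0$. The wrapper then yields $S(t)\le S(s)$.

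The only substantive dependency is the second step. It relies on the previously established forward-invariance theorem, because without nonnegativity of both $S$ and $I$ the derivative sign is not controlled. Since \Cref{thm:forward-inv} was proved without assuming monotonicity, there is no circularity. In Isabelle I expect the main friction to be routine: deriving $u\in\{a..b\}$ from $u\in\{s..t\}$, and the product-sign step, which \isa{mult\_nonneg\_nonneg} or a simple \isa{nlinarith}-style call should handle.
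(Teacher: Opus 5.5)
Your proposal is correct and takes essentially the same approach as the paper. Both apply the wrapper \isa{nonincreasing\_from\_nonpos\_derivative} with witness $-\beta S(u)I(u)$ from \isa{ode\_S}, and both discharge the sign premise using the derived nonnegativity of $S$ and $I$ (\Cref{thm:forward-inv}) together with $\beta>0$; your handling of $[s,t]\subseteq[a,b]$ and of non-circularity is accurate bookkeeping that the paper leaves implicit.
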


\begin{theorem}[\isa{R\_nondecreasing}]
For $s, t \in [a,b]$ with $s \le t$: $R(s) \le R(t)$.
\end{theorem}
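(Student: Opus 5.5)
The plan is to obtain $R(s)\le R(t)$ directly from the monotonicity wrapper \isa{nondecreasing\_from\_nonneg\_derivative}, applied on the subinterval $[s,t]$ with $f=R$ and derivative witness $D(u)=\gamma I(u)$. Given $s,t\in[a,b]$ with $s\le t$, every $u\in[s,t]$ also lies in $[a,b]$, since $a\le s\le u\le t\le b$. The locale assumption \isa{ode\_R} therefore supplies the full-neighbourhood derivative $(R\ \text{\isa{has\_real\_derivative}}\ \gamma I(u))\ \text{\isa{at}}\ u$ at each such $u$. This is exactly the derivative premise of the wrapper, in the same convention, so no conversion between within-interval and full derivatives is needed.

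The remaining premise is the sign condition $\gamma I(u)\ge 0$ for every $u\in[s,t]$. Here I will use \isa{I\_nonneg} from \Cref{thm:forward-inv}, which gives $I(u)\ge0$ on all of $[a,b]$, together with the locale assumption $\gamma>0$. The product of a positive and a nonnegative real is nonnegative, which discharges the sign obligation. The wrapper then yields $R(s)\le R(t)$. The degenerate case $s=t$ needs no separate treatment, because the wrapper only requires $s\le t$ and its conclusion is then reflexive.

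I do not expect a genuine mathematical obstacle. The only point needing care is the interval bookkeeping, namely showing $[s,t]\subseteq[a,b]$ so that both the ODE hypothesis and the nonnegativity of $I$ transfer from $[a,b]$ to the subinterval. This is a routine arithmetic step; in the artifact I would expect it to close by \isa{auto} or by explicit \isa{atLeastAtMost} membership reasoning.

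There is also a dependency-order point: the nonnegativity of $I$ must already be available, and it is, since \Cref{thm:forward-inv} was proved independently of $R$ by the integrating-factor argument. Hence there is no circularity. The companion result \isa{S\_nonincreasing} follows the same pattern with \isa{nonincreasing\_from\_nonpos\_derivative} and the sign fact $-\beta S(u)I(u)\le0$, which uses both \isa{S\_nonneg} and \isa{I\_nonneg}.
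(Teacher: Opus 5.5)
Your proposal is correct and matches the paper's proof. The paper likewise observes that $R'=\gamma I\ge 0$, using $\gamma>0$ and the derived nonnegativity of $I$, and then applies the framework's wrapper \isa{nondecreasing\_from\_nonneg\_derivative}; your explicit check that $[s,t]\subseteq[a,b]$ and your remark on the reflexive case $s=t$ simply spell out steps the paper leaves implicit.
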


\begin{proof}
$S' = -\beta SI \le 0$ (product of nonnegative factors, negated; using the
\emph{derived} nonnegativity of $S$ and $I$).
$R' = \gamma I \ge 0$ (product of positive $\gamma$ and nonnegative $I$).
Apply the framework's monotonicity wrappers.
\end{proof}

\subsection{The Kermack--McKendrick Phase-Plane Invariant}
\label{sec:phase-plane}

The classical Kermack--McKendrick invariant~\cite{kermack1927sir,hethcote2000mathematics}
constrains the SIR phase-plane trajectory to a level curve. It requires the
additional hypothesis $S(a) > 0$
(strict positivity of the initial susceptible population).
Epidemiologically, this is the phase-plane constraint relating susceptible
depletion to the rise and fall of the infectious compartment; the formal theorem
below proves only this level-curve constancy, not terminal behavior.
By \isa{linear\_ode\_pos} applied to $S' = (-\beta I) S$, strict positivity
propagates: $S(t) > 0$ for all $t \in [a,b]$.  This ensures $\ln(S(t))$ is
well-defined and that $\ln \circ S$ is differentiable on $[a,b]$ in the same
full-neighbourhood sense, by the chain rule for compositions with positive
argument.  The same facts give the continuity needed for the constancy argument.

\begin{definition}[\isa{KM\_invariant}]
Under the theorem's $S(a)>0$ hypothesis, the dimensionally meaningful
mathematical family is
\[
  V_{S_\ast}(t)=I(t)+S(t)-\frac{\gamma}{\beta}\ln\!\bigl(S(t)/S_\ast\bigr),
  \qquad S_\ast>0.
\]
Changing $S_\ast$ shifts the expression by an additive constant, so the
constancy theorem is unchanged.  The checked real-coordinate statement treats
compartment values as unitless coordinates, specializes this family to the
coordinate reference value $S_\ast=1$, as a coordinate convention rather than a
dimensional claim, and defines
$V(t) = I(t) + S(t) - \frac{\gamma}{\beta} \ln(S(t))$.
The invariant is the assertion that $V(t)$ is constant on $[a,b]$.
\end{definition}
Formally, $\ln$ is Isabelle's real logarithm applied to the positive real
coordinate $S(t)$.  Since $I(t)$ and $S(t)$ are added, they are understood on the
same real-valued population scale used by the unnormalized SIR system.
The fundamental checked fact is the constancy relation, not an absolute
dimensioned value of the logarithmic expression.
\Cref{sec:limitations} records the dimensional-analysis limitation.

\begin{lemma}[\isa{S\_pos\_preserved}]
If $S(a) > 0$, then $S(t) > 0$ for all $t \in [a,b]$.
\end{lemma}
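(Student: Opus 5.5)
The plan is to obtain the lemma as a direct instance of \Cref{cor:pos} (\isa{linear\_ode\_pos}), applied to the susceptible equation viewed as a homogeneous linear scalar ODE. Within the \isa{SIR\_solution} locale I take $X=S$ and the coefficient $f(t)=-\beta I(t)$. The hypotheses of \Cref{thm:linear-ode} then have to be discharged in the locale, and \Cref{cor:pos} with the extra premise $S(a)>0$ yields $S(t)>0$ for every $t\in[a,b]$.

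There are three concrete obligations, which I would discharge in this order. First, the interval condition $a<b$ is exactly the locale assumption \isa{interval}. Second, continuity of $f=\lambda t.\,-\beta I(t)$ on $[a,b]$ comes from the derivative hypothesis \isa{ode\_I} at every point of $[a,b]$. Using \isa{DERIV\_atLeastAtMost\_imp\_continuous\_on}, or the already recorded continuity fact for $I$ analogous to \isa{continuous\_S}, this gives continuity of $I$ on $[a,b]$, and multiplying by the constant $-\beta$ preserves continuity. Third, the derivative hypothesis must appear in the exact shape $(S\ \text{\isa{has\_real\_derivative}}\ f(t)\cdot S(t))\ \text{\isa{at}}\ t$. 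The locale instead supplies the derivative value $-\beta\,S(t)\,I(t)$, so I would rewrite it as $(-\beta I(t))\cdot S(t)$ using commutativity and associativity of multiplication, and then transport the derivative fact along this equality of values.

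As with \Cref{thm:forward-inv}, this argument needs no sign information about $I$; only continuity of $I$, inherited from differentiability, is used. Positivity then follows immediately from the representation $S(t)=S(a)\exp(\int_a^t -\beta I(s)\,ds)$ of \isa{S\_solution}, since the exponential factor is strictly positive. One could also bypass \Cref{cor:pos} and read positivity off this representation directly.

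I expect the main, though minor, obstacle to be the third step: matching the derivative value syntactically to the $f(t)\cdot X(t)$ form demanded by \isa{linear\_ode\_pos}, so that the rule unifies with the right instantiation of $f$. I would handle this by stating the coefficient explicitly as $\lambda t.\,-\beta\cdot I(t)$ and proving the rewritten derivative fact with a one-line algebraic equality before applying the corollary, rather than relying on higher-order unification.
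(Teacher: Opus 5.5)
Your proposal is correct and matches the paper's proof, which is exactly an application of \isa{linear\_ode\_pos} to $S' = (-\beta I)S$ with coefficient $f(t) = -\beta I(t)$. Your explicit discharge of the three side conditions (the interval assumption, continuity of $I$, and rewriting the derivative value into the form $f(t)\cdot S(t)$) spells out what the paper's one-line proof leaves implicit.
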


\begin{proof}
By \isa{linear\_ode\_pos} applied to $S' = (-\beta I) S$.
\end{proof}

\begin{theorem}[\isa{KM\_invariant\_constant}]
\label{thm:km-invariant}
If $S(a) > 0$, then for all $s, t \in [a,b]$:
\[
  V(s) = V(t)
\]
\end{theorem}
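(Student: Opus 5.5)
The proof is a zero-derivative constancy argument for $V$, in the same shape as the proof of \Cref{thm:conservation}. The first step is to invoke \isa{S\_pos\_preserved} to get $S(t)>0$ for every $t\in[a,b]$. This is the only use of the hypothesis $S(a)>0$, and it is what makes $\ln(S(t))$ a legitimate real logarithm on the whole closed interval. Next, at each $t\in[a,b]$, the chain rule (\isa{DERIV\_chain2}) combines the derivative of $\ln$ at the positive point $S(t)$ with the locale assumption \isa{ode\_S}. This gives $\ln\circ S$ the full-neighbourhood derivative
\[
  \frac{-\beta S(t) I(t)}{S(t)} = -\beta I(t).
\]
Both the chain-rule step and the simplification of this quotient need $S(t)\neq 0$.

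The derivative of $V$ is then assembled with the sum and scalar-multiple rules, using \isa{ode\_I} and \isa{ode\_S}:
\[
  V'(t) = \bigl(\beta S(t)I(t)-\gamma I(t)\bigr) + \bigl(-\beta S(t)I(t)\bigr) - \frac{\gamma}{\beta}\bigl(-\beta I(t)\bigr).
\]
Using $\beta>0$, so that $\frac{\gamma}{\beta}\cdot\beta=\gamma$, this sum simplifies to $0$. The result is a zero derivative for $V$ at every point of $[a,b]$, in particular on $(a,b)$.

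Continuity of $V$ on $[a,b]$ comes from the same derivative facts via \isa{DERIV\_atLeastAtMost\_imp\_continuous\_on}, or equivalently from \isa{DERIV\_continuous} pointwise. \isa{DERIV\_isconst2} then gives $V(x)=V(a)$ for every $x\in[a,b]$. Applying this to $x=s$ and to $x=t$ yields $V(s)=V(t)$.

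The main obstacle is the derivative of $\ln\circ S$. The positivity premise has to be threaded pointwise into the chain-rule instance, and the quotient $(-\beta S I)/S$ has to be discharged explicitly to $-\beta I$ before the zero-sum algebra. As in \Cref{thm:linear-ode}, I would instantiate \isa{DERIV\_chain2} with \isa{DERIV\_ln} by hand rather than rely on \isa{derivative\_intros} automation. Once the derivative of $V$ is stated as an explicit expression, proving that it equals zero is routine field arithmetic with $\beta\neq0$ and $S(t)\neq0$.
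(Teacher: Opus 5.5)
Your proposal is correct and follows essentially the same route as the paper. Both arguments get strict positivity of $S$ from \isa{S\_pos\_preserved}, use the chain rule for $\ln\circ S$, substitute the ODE to get $V'(t)=0$, and conclude with \isa{DERIV\_isconst2}, taking continuity from differentiability; the only addition is that you spell out the step from $V(x)=V(a)$ to the symmetric form $V(s)=V(t)$, which the paper leaves implicit.
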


\begin{proof}
Using $(\ln \circ S)'(t)=S'(t)/S(t)$ by the chain rule and $S(t)>0$, compute
\[
  V'(t) = I'(t) + S'(t) - \frac{\gamma}{\beta}\cdot\frac{S'(t)}{S(t)}.
\]
Substituting the ODE system and using $S(t) > 0$:
\begin{align*}
V'(t)
  &= (\beta S I - \gamma I) + (-\beta S I)
     - \frac{\gamma}{\beta}\cdot\frac{-\beta I S}{S} \\
  &= \beta S I - \gamma I - \beta S I + \gamma I \\
  &= 0.
\end{align*}
By \isa{DERIV\_isconst2}, $V$ is constant.
\end{proof}

\begin{corollary}[\isa{KM\_invariant\_value}]
If $S(a) > 0$, then for all $t \in [a,b]$:
\[
  V(t) = I(a) + S(a) - \frac{\gamma}{\beta}\ln(S(a)).
\]
\end{corollary}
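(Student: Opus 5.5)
This corollary is a direct specialization of \Cref{thm:km-invariant}. I will instantiate that theorem with $s=a$ and the given $t\in[a,b]$. This requires checking that $a\in[a,b]$, which follows from the locale assumption $a<b$. The theorem then yields $V(a)=V(t)$. Unfolding the definition of \isa{KM\_invariant} at $a$ gives
\[
  V(a)=I(a)+S(a)-\frac{\gamma}{\beta}\ln(S(a)),
\]
so after rewriting with symmetry of equality the stated identity follows. The hypothesis $S(a)>0$ is passed unchanged to \Cref{thm:km-invariant}. No further use of it is needed here, because the logarithm at $a$ only has to appear syntactically in the conclusion.

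The only real obstacle is the interface. The statement must match exactly the form in which $V$ was defined in the checked source, as a lambda applied to $t$ versus an expanded expression. The plan is therefore to state the corollary in expanded form and close it by unfolding the definition and simplifying with the instantiated constancy fact. Membership facts $a\in\{a..b\}$ and $t\in\{a..b\}$ are discharged by \isa{interval} and the assumption. I expect no analytic work at this step: the derivative computation, positivity propagation via \isa{S\_pos\_preserved}, and the \isa{DERIV\_isconst2} step are all already contained in \Cref{thm:km-invariant}.
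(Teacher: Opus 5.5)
Your proposal is correct and matches the paper's route: the paper gives no separate proof and presents the corollary as the initial-value form of \Cref{thm:km-invariant}. That is exactly your specialization $s=a$ (with $a\in[a,b]$ from $a<b$), followed by unfolding $V(a)=I(a)+S(a)-\frac{\gamma}{\beta}\ln(S(a))$.
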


Equivalently, subtracting the initial value gives the logarithmic-ratio form
\[
  I(t)+S(t)-I(a)-S(a)
  = \frac{\gamma}{\beta}\ln\!\left(\frac{S(t)}{S(a)}\right).
\]
This is often the most natural dimensional reading; the checked theorem above
remains the coordinate-level constancy statement.

The theorem states constancy symmetrically, matching the Isabelle lemma; the
corollary gives the initial-value form most commonly used in epidemiological
analysis.  Here \(\ln\) is the natural logarithm, formalized as Isabelle's real
\isa{ln}; the strict positivity of \(S(t)\) is used both to define the logarithm
and to justify cancellation by \(S(t)\) in the derivative calculation.
If $S(a)=0$, the logarithmic invariant is not stated: the $S$ integral
representation gives $S(t)=0$ on $[a,b]$, so the log-based phase-plane invariant
is unnecessary for that boundary case.

This invariant constrains the phase-plane trajectory to a one-dimensional level
set of $V$.
Geometrically, the pair $(S(t),I(t))$ lies on the level curve
\[
  \left\{(x,y)\in\R_{>0}\times\R_{\geq 0} \mid
    y + x - \frac{\gamma}{\beta}\ln x = V(a)\right\}
\]
in the $(S,I)$ phase plane.
Here $x$ represents the susceptible population and $y$ the infectious
population; forward invariance supplies the nonnegativity of~$y$ along the
trajectory.

\begin{figure}[H]
\centering
\begin{tikzpicture}[>=Stealth,scale=0.95]
  \draw[->] (0,0) -- (4.6,0) node[right] {$S$};
  \draw[->] (0,0) -- (0,3.0) node[above] {$I$};
  \draw[dashed] (1.75,0) -- (1.75,2.45);
  \node[below] at (1.75,0) {$S=\gamma/\beta$};
  \draw[thick]
    plot[smooth] coordinates {(4.0,0.45) (3.15,1.35) (2.25,2.12)
      (1.75,2.28) (1.2,1.95) (0.55,0.75)};
  \draw[thick,->]
    plot[smooth] coordinates {(3.35,1.15) (2.65,1.78) (2.0,2.18)};
  \draw[thick,->]
    plot[smooth] coordinates {(1.65,2.26) (1.25,1.97) (0.85,1.35)};
  \node[align=left] at (3.25,2.6)
    {\small level curve\\[-1pt]\small $I+S-\frac{\gamma}{\beta}\ln S=V(a)$};
\end{tikzpicture}
\caption{Illustrative schematic phase-plane sketch, drawn as a level-curve
diagram rather than generated from a checked trajectory or computed from the
formal artifact and not to scale.  The figure gives a phase-plane reading of the
Kermack--McKendrick invariant.  The sketch schematically represents a level
curve of $V$, arrows indicate increasing time (right-to-left as $S$ decreases),
and the vertical line marks the stationary-infection threshold from
\Cref{sec:R0}; the artifact does not
separately prove that every trajectory crosses the threshold.}
\label{fig:phase-plane}
\end{figure}

Together with conservation, it is the classical starting point for final-size
and threshold analyses of SIR
dynamics~\cite{hethcote1976qualitative,hethcote2000mathematics,diekmann2013tools};
\Cref{sec:limitations} records that the terminal final-size equation is not
formalized here.
Deriving that equation would also require terminal-limit or asymptotic
convergence facts, which are outside the present artifact.

\FloatBarrier

\subsection{Epidemic Growth Condition}
\label{sec:growth-condition}

The following algebraic step uses the derived forward invariance
$I(t)\ge0$; without that sign premise the equivalence is false over arbitrary
real values of $I(t)$.  By the SIR ODE, the left-hand expression in the theorem
is precisely the derivative $I'(t)$.  Throughout the threshold subsections,
$I'(t)$ abbreviates the derivative value supplied by the \isa{ode\_I}
assumption, namely $\beta S(t)I(t)-\gamma I(t)$.

\begin{theorem}[\isa{epidemic\_growth\_iff}]
In the \isa{SIR\_solution} locale, where $I(t)\ge0$ is derived by
\Cref{thm:forward-inv}, for $t \in [a,b]$:
\[
  \beta S(t) I(t) - \gamma I(t) > 0
  \quad\iff\quad
  I(t) > 0 \;\wedge\; \beta S(t) > \gamma
\]
\end{theorem}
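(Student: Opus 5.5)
The plan is to reduce the equivalence to pure real-arithmetic reasoning, after factoring the left-hand side and importing the sign fact from \Cref{thm:forward-inv}. Fix $t\in[a,b]$. The first step rewrites the derivative expression as
\[
  \beta S(t)I(t)-\gamma I(t) = I(t)\,\bigl(\beta S(t)-\gamma\bigr),
\]
which is a ring identity and needs no hypotheses. The second step invokes \isa{I\_nonneg} (\Cref{thm:forward-inv}) to obtain $I(t)\ge 0$; this is the only place the locale's ODE and initial-data assumptions enter. The assumptions $\beta>0$ and $\gamma>0$ are not needed for the equivalence itself.

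For the forward direction, assume $I(t)(\beta S(t)-\gamma)>0$. A product of two reals is positive exactly when both factors are positive or both are negative. The case $I(t)<0$ contradicts $I(t)\ge0$. Hence $I(t)>0$ and $\beta S(t)-\gamma>0$, that is, $\beta S(t)>\gamma$. In Isabelle, I would use \isa{zero\_less\_mult\_iff} and then discharge the negative branch by linear arithmetic against the nonnegativity fact. For the converse, $I(t)>0$ and $\beta S(t)-\gamma>0$ give a positive product directly by \isa{mult\_pos\_pos}.

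There is no real mathematical obstacle here. The only point requiring care is the one the paper stresses: the equivalence fails without $I(t)\ge0$, since $I(t)<0$ together with $\beta S(t)<\gamma$ would also make the product positive. The proof therefore has to cite the derived forward-invariance fact explicitly rather than assume it.

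The practical risk is automation. The nonlinear term $\beta S(t)I(t)$ can make \isa{auto} or \isa{linarith} fail or loop. To avoid this, I would first state the factored form as a separate \isa{have} proved by \isa{algebra\_simps}, and only then apply \isa{zero\_less\_mult\_iff} to the product $I(t)\cdot(\beta S(t)-\gamma)$, so that the remaining goals are linear.
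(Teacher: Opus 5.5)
Your proposal is correct and takes the paper's approach: factor the left-hand side as $I(t)(\beta S(t)-\gamma)$, use the derived $I(t)\ge0$ from \Cref{thm:forward-inv}, and conclude that positivity of the product is equivalent to both factors being positive. Your case split via \isa{zero\_less\_mult\_iff}, which rules out the negative branch using $I(t)\ge0$, simply spells out the sign step the paper states in one sentence.
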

The proof factors the left-hand side as
$I(t)(\beta S(t)-\gamma)$.  Since forward invariance gives $I(t)\ge0$,
strict positivity of this product is equivalent to $I(t)>0$ and
$\beta S(t)-\gamma>0$.  Naming this equivalence in Isabelle makes the
supporting algebraic factorization and positivity side condition explicit for
the reproduction-threshold theorems below.

\subsection{Initial Effective Threshold Ratio}
\label{sec:R0}

\subsubsection{Definitions and Notation}

\paragraph{Terminology.}
The checked identifier \isa{R\_zero} is historical.  In the paper we write this
quantity as $\Rinit=\beta S(a)/\gamma$, the initial effective threshold ratio
for the chosen initial state.  It is not the disease-free basic reproduction
number $\Rclassical=\beta N/\gamma$ of the fully susceptible disease-free state,
except when $S(a)=N$.
We avoid calling $\Rinit$ ``$R_0$'' because $S(a)$ need not equal the
fully susceptible disease-free population $N$.
In the normalized convention $N=1$, this fully susceptible disease-free
quantity specializes to the familiar scalar threshold $\beta/\gamma$.

\paragraph{Definitions and caveats.}
\begin{definition}[\isa{R\_zero}, \isa{R\_eff}]
\[
  \Rinit = \frac{\beta \cdot S(a)}{\gamma}, \qquad
  \Reff(t) = \frac{\beta \cdot S(t)}{\gamma}
\]
\end{definition}

Here $S$ is on the same real-valued population-count scale as
\Cref{eq:sir-s,eq:sir-i,eq:sir-r}.  For a fraction-valued formulation one would
normalize $N=1$ and rescale $\beta$ accordingly.  We call
$\beta S(t)/\gamma$ the effective threshold ratio; under this unnormalized
mass-action convention it plays the role usually played by an effective
reproduction number for the current susceptible count.  We avoid calling it an
effective reproduction number except in this deterministic mass-action threshold
sense; it is not a stochastic secondary-case count or a population-normalized
incidence convention.  The notation summary is:
\[
\begin{array}{rcll}
  \Rinit &=& \beta S(a)/\gamma &= \Reff(a)\quad\text{(this paper)},\\
  \Reff(t) &=& \beta S(t)/\gamma &\text{(time-dependent mass-action threshold)},\\
  \Rclassical &=& \beta N/\gamma &\text{(fully susceptible DFE quantity)}.
\end{array}
\]
In the theorem statements below, \(\Reff(t)\) always denotes this deterministic
mass-action threshold ratio.
Following the disease-free next-generation convention
\cite{diekmann1990definition,vandendriessche2002reproduction,diekmann2010construction}, we write
$\Rclassical$ for the mass-action SIR basic reproduction number obtained by
substituting the total population $N$ into the fully susceptible disease-free
state; thus $\Rclassical=\beta N/\gamma$, and
$\Rinit=\Rclassical$ only when $S(a)=N$.
The comparison with $\Rclassical$ is explanatory: the checked threshold theorems
below use $\Rinit$ and $\Reff(t)$, not a separate named theorem about
$\Rinit\le\Rclassical$.
For this single-infectious-compartment SIR model, at a fixed susceptible
background \(S\), and in particular at the fully susceptible DFE, the
next-generation matrix calculation reduces to the scalar ratio $\beta S/\gamma$.
The formalization treats all quantities as real values without physical-unit
tracking.  With those mass-action units,
$\Rinit=\beta S(a)/\gamma$ is dimensionless and $\beta SI$ is a population-rate
term.
The initial-threshold interpretation is most informative when $S(a)>0$; if
$S(a)=0$, then $\Rinit=0$ and no susceptible population remains for new
infections, although the algebraic theorems below still apply under their stated
hypotheses.
If $I(a)=0$, \Cref{thm:i-s-solutions} already gives $I(t)=0$ for all $t$
throughout the interval, so the infection-free trajectory is handled separately
from the positive-infection threshold statements below.  Substituting
$I(t)=0$ into the SIR ODE gives $S'=-\beta S(t)\cdot 0=0$ and
$R'=\gamma\cdot 0=0$, yielding a stationary infection-free trajectory on the
interval; the familiar fully susceptible DFE used for $\Rclassical$ is the
special case $S=N$ and $R=0$.
The vanishing-$I$ consequence is checked through \Cref{thm:i-s-solutions};
the preceding stationary-$S,R$ observation is explanatory algebra from the
displayed ODE, not a separately named Isabelle theorem.

The effective threshold ratio $\Reff(t)$ is nonincreasing because $S$ is
nonincreasing.  Therefore, if $\Rinit \le 1$, the infected compartment is
nonincreasing on the entire interval $[a,b]$.
The derivative identity behind the threshold statement is
\[
  I'(t) = I(t)\,(\beta S(t)-\gamma).
\]
The preceding subsection, \Cref{sec:growth-condition}, records the raw algebraic
factorization of this identity.

\subsubsection{Threshold Theorems}

All threshold ratios in this subsection use the unnormalized mass-action
convention from \Cref{sec:sir-ode-system}, where
$\Reff(t)=\beta S(t)/\gamma$.

\paragraph{Pointwise threshold theorems.}
\begin{theorem}[\isa{epidemic\_growth\_R\_eff}]
For $t \in [a,b]$ with $I(t)>0$:
\[
  I'(t) > 0 \;\Longleftrightarrow\; \Reff(t) > 1.
\]
\end{theorem}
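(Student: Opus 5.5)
The plan is to reduce the statement to the algebraic factorization already recorded in \Cref{sec:growth-condition}, and then to rewrite the threshold comparison in terms of $\Reff(t)$. Throughout, $I'(t)$ abbreviates the derivative value $\beta S(t)I(t)-\gamma I(t)$ supplied by \isa{ode\_I}. We therefore never reason about the derivative itself, only about this real-valued expression at the fixed time $t\in[a,b]$.

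First, I would factor the expression as $I'(t)=I(t)\,(\beta S(t)-\gamma)$. Under the local hypothesis $I(t)>0$, a product with a strictly positive left factor is positive exactly when the right factor is positive. Hence $I'(t)>0 \iff \beta S(t)-\gamma>0 \iff \beta S(t)>\gamma$. One could instead invoke \isa{epidemic\_growth\_iff} directly. Its right-hand side $I(t)>0\wedge\beta S(t)>\gamma$ collapses to $\beta S(t)>\gamma$ because the first conjunct is the hypothesis. This route also explains why the forward-invariance fact $I(t)\ge0$ is not even needed here, since the strict hypothesis $I(t)>0$ subsumes it.

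Second, I would convert $\beta S(t)>\gamma$ into $\Reff(t)>1$. Unfolding the definition, $\Reff(t)=\beta S(t)/\gamma$. Since $\gamma>0$ (the locale assumption \isa{pos\_gamma}), dividing both sides by $\gamma$ preserves strict inequality, so $\beta S(t)>\gamma \iff \beta S(t)/\gamma>1$. In Isabelle this is a single application of the lemma relating division by a positive denominator to multiplication (of the form \isa{less\_divide\_eq} / \isa{divide\_less\_eq} with $0<\gamma$), after unfolding \isa{R\_eff}.

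The main obstacle is not mathematical but interface-level. It consists of making sure the automation sees the positivity side conditions $0<I(t)$ and $0<\gamma$ in the right form. The step I expect to need care is the sign reasoning for the product, since nonlinear arithmetic is not decided by \isa{linarith}. I would first try \isa{zero\_less\_mult\_iff} together with the hypothesis $I(t)>0$, which reduces it to a linear fact. The division step is then handled with the positive-denominator rewrite, and chaining the two equivalences gives the theorem.
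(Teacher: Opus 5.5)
Your proposal is correct and follows essentially the same route the paper indicates. The paper factors $I'(t)=I(t)(\beta S(t)-\gamma)$ as in \isa{epidemic\_growth\_iff} and uses $I(t)>0$ to reduce to $\beta S(t)>\gamma$; since $\gamma>0$, this is equivalent to $\Reff(t)>1$. You are also right that the strict hypothesis $I(t)>0$ makes the forward-invariance fact $I(t)\ge0$ unnecessary for this step.
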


\begin{theorem}[\isa{initial\_epidemic\_growth}]
If $I(a) > 0$ and $\Rinit > 1$, then
$I'(a)=I(a)(\beta S(a)-\gamma)>0$ (infectious compartment initially increasing).
\end{theorem}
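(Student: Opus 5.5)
The statement to prove is \isa{initial\_epidemic\_growth}: if $I(a)>0$ and $\Rinit>1$, then $I'(a)>0$. The plan is to reduce it to the pointwise threshold theorem \isa{epidemic\_growth\_R\_eff} at the left endpoint $t=a$. Alternatively, one can use \isa{epidemic\_growth\_iff} directly.

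\textbf{Step 1: endpoint membership and the derivative value.} Since the locale assumes $a<b$, we have $a\in[a,b]$. The \isa{ode\_I} assumption, instantiated at $t=a$, gives the derivative value
\[
  I'(a)=\beta S(a)I(a)-\gamma I(a).
\]
Factoring yields $I'(a)=I(a)(\beta S(a)-\gamma)$, which is the equality displayed in the statement. This is routine ring algebra.

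\textbf{Step 2: unfold the threshold.} By definition, $\Rinit=\beta S(a)/\gamma=\Reff(a)$. Since $\gamma>0$ (\isa{pos\_gamma}), the hypothesis $\Rinit>1$ is equivalent to $\beta S(a)>\gamma$. I would obtain this by multiplying through by $\gamma$, for instance with \isa{pos\_divide\_less\_eq} or a field-simplification step using $\gamma\neq0$.

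\textbf{Step 3: conclude.} With $I(a)>0$ and $\beta S(a)-\gamma>0$, the product $I(a)(\beta S(a)-\gamma)$ is strictly positive (\isa{mult\_pos\_pos}). Equivalently, one can apply the right-to-left direction of \isa{epidemic\_growth\_R\_eff} at $t=a$, using the hypothesis $I(a)>0$ and $\Reff(a)=\Rinit>1$. Forward invariance is not needed here, because $I(a)>0$ is assumed directly.

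\textbf{Main obstacle.} The only nontrivial step is the division-clearing in Step 2, namely ensuring that the unfolding of \isa{R\_zero} matches $\beta S(a)/\gamma$ syntactically. Everything else is a one-line instantiation at the endpoint $a$.
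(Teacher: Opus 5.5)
Your proposal is correct and matches the paper's implicit argument. The paper gives no separate proof for this theorem. It rests on the factorization $I'(t)=I(t)(\beta S(t)-\gamma)$ of the \isa{ode\_I} derivative value at $t=a$, on $\Rinit=\Reff(a)$ with $\gamma>0$ to clear the division, and on positivity of a product of positive factors. Equivalently, it is the right-to-left direction of \isa{epidemic\_growth\_R\_eff} at $t=a$. You are also right that forward invariance is unnecessary here, since $I(a)>0$ is a hypothesis.
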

This is a pointwise initial-growth statement only.  Since $S$ is
nonincreasing, $\Rinit>1$ alone does not imply that $I'$ remains positive
throughout the whole interval.

\begin{theorem}[\isa{initial\_epidemic\_decline}]
If $I(a) > 0$ and $\Rinit < 1$, then
$I'(a)=I(a)(\beta S(a)-\gamma)<0$ (infectious compartment initially decreasing).
\end{theorem}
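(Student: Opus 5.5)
The plan is to reduce the statement to an elementary sign computation at the single point $t=a$, mirroring the proof of \isa{initial\_epidemic\_growth}. First, $a\in[a,b]$ because the locale assumes $a<b$. So the \isa{ode\_I} assumption applies at $t=a$ and identifies $I'(a)$ with the derivative value $\beta S(a)I(a)-\gamma I(a)$. A one-line ring normalization rewrites this as $I(a)(\beta S(a)-\gamma)$. This establishes the displayed equality in the conclusion.

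For the strict inequality, we unfold $\Rinit=\beta S(a)/\gamma$. Since $\gamma>0$ by \isa{pos\_gamma}, we have $\Rinit<1$ if and only if $\beta S(a)<\gamma$; this follows by multiplying through by the positive $\gamma$ (an Isabelle \isa{divide\_less\_eq}-style rewrite). Hence $\beta S(a)-\gamma<0$. Combined with the hypothesis $I(a)>0$, the product $I(a)(\beta S(a)-\gamma)$ is a positive number times a negative number, and is therefore negative.

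An alternative route applies \isa{epidemic\_growth\_R\_eff} at $t=a$, noting $\Reff(a)=\Rinit$. However, that theorem only characterizes $I'(a)>0$; ruling out $I'(a)=0$ would also need the stationary-infection fact. So the direct factorization argument is cleaner, and I would use it.

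The only mildly delicate step is the division manipulation converting $\Rinit<1$ into $\beta S(a)<\gamma$. This needs the side condition $\gamma>0$ supplied explicitly to the field-simplification rules; otherwise the rest closes by \isa{mult\_pos\_neg}-type reasoning. Note that no forward-invariance fact is needed here, because positivity of $I(a)$ is a hypothesis.
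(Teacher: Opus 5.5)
Your proposal is correct and follows essentially the paper's route: you evaluate \isa{ode\_I} at $t=a$, factor the derivative value as $I(a)(\beta S(a)-\gamma)$, and use $\gamma>0$ to turn $\Rinit<1$ into $\beta S(a)-\gamma<0$, so its product with $I(a)>0$ is negative. Your side remark is also accurate: \isa{epidemic\_growth\_R\_eff} alone would give only $I'(a)\le 0$, and strictness would additionally need the stationary condition.
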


\begin{theorem}[\isa{initial\_no\_epidemic}; initial non-growth]
If $I(a) > 0$ and $\Rinit \le 1$, then
$I'(a)=I(a)(\beta S(a)-\gamma)\le 0$.
\end{theorem}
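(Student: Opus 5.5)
The statement to prove is \isa{initial\_no\_epidemic}: if $I(a)>0$ and $\Rinit\le 1$, then $I'(a)\le 0$. The plan is a pointwise algebraic argument at the left endpoint $t=a$. It uses the same factorization as \isa{epidemic\_growth\_iff}, but a nonstrict sign analysis replaces the strict one.

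First, the derivative value. As fixed at the start of the threshold subsections, $I'(a)$ abbreviates the value supplied by \isa{ode\_I} at $t=a\in[a,b]$. Membership of $a$ in $[a,b]$ follows from the locale assumption $a<b$. This value is $\beta S(a)I(a)-\gamma I(a)$, and by ring algebra it equals $I(a)(\beta S(a)-\gamma)$. That gives the displayed equality.

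Second, convert the hypothesis on $\Rinit$ into a sign condition. Unfold $\Rinit=\beta S(a)/\gamma$. Since $\gamma>0$ by \isa{pos\_gamma}, multiplying $\beta S(a)/\gamma\le 1$ by $\gamma$ yields $\beta S(a)\le\gamma$. Hence $\beta S(a)-\gamma\le 0$.

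Third, the product $I(a)(\beta S(a)-\gamma)$ has a positive first factor (by hypothesis; even $I(a)\ge0$ from \isa{init\_I} would suffice) and a nonpositive second factor. It is therefore $\le 0$, using the standard lemma that a nonnegative number times a nonpositive number is nonpositive (\isa{mult\_nonneg\_nonpos}).

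The only mildly delicate step is the division manipulation. In Isabelle I would first try \isa{pos\_divide\_le\_eq} with $0<\gamma$, and if that fails fall back to \isa{field\_simps}. No forward-invariance or monotonicity facts are needed, since everything happens at $t=a$.
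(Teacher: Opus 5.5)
Your proposal is correct and follows the route the paper indicates: factor $I'(a)=I(a)(\beta S(a)-\gamma)$ as in \isa{epidemic\_growth\_iff}, use $\gamma>0$ to turn $\Rinit\le 1$ into $\beta S(a)\le\gamma$, and conclude from the sign of the product. You are also right that $I(a)\ge 0$ alone would suffice for this nonstrict conclusion.
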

Despite the historical Isabelle name \isa{initial\_no\_epidemic}, this is an
initial non-growth theorem; it does not assert that no infected individuals are
present or that no epidemic interpretation is possible.

\paragraph{Interval-wide monotonicity under $\Rinit \le 1$.}
The pointwise condition strengthens to an interval-wide monotonicity result:

\begin{theorem}[\isa{I\_nonincreasing\_if\_R\_zero\_le\_one}]
If $\Rinit \le 1$, then for all $s, t \in [a,b]$ with $s \le t$: $I(t) \le I(s)$.
\end{theorem}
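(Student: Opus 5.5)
The plan is to apply the framework's monotonicity wrapper \isa{nonincreasing\_from\_nonpos\_derivative} to $I$ on the subinterval $[s,t]\subseteq[a,b]$. For this we take the derivative witness $D(u)=\beta S(u)I(u)-\gamma I(u)$. The \isa{ode\_I} assumption supplies the full-neighbourhood derivative at every $u\in[s,t]$, since $[s,t]\subseteq[a,b]$. The only remaining obligation is the sign premise $D(u)\le 0$ for all $u\in[s,t]$.

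To discharge the sign premise, factor $D(u)=I(u)\,(\beta S(u)-\gamma)$, exactly as in \Cref{sec:growth-condition}. By \Cref{thm:forward-inv} we have $I(u)\ge 0$. It therefore suffices to show $\beta S(u)-\gamma\le 0$, i.e.\ $\beta S(u)\le\gamma$.

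The bound on $S(u)$ comes from two facts. First, \isa{S\_nonincreasing} with $a\le u$ gives $S(u)\le S(a)$, and since $\beta>0$ this yields $\beta S(u)\le\beta S(a)$. Second, the hypothesis $\Rinit\le 1$ means $\beta S(a)/\gamma\le 1$; multiplying by $\gamma>0$ gives $\beta S(a)\le\gamma$. Chaining the two inequalities gives $\beta S(u)\le\gamma$. A product of the nonnegative factor $I(u)$ and the nonpositive factor $\beta S(u)-\gamma$ is nonpositive, so $D(u)\le 0$, and the wrapper concludes $I(t)\le I(s)$.

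The main obstacle is not conceptual but interface-level. The threshold hypothesis is stated as a quotient $\beta S(a)/\gamma\le 1$, and clearing the denominator needs $\gamma>0$ explicitly. In Isabelle I would first prove the auxiliary fact $\beta S(a)\le\gamma$ with \isa{pos\_divide\_le\_eq} (or an equivalent field-simplification lemma) using \isa{pos\_gamma}. After that, the sign step is a short \isa{mult\_nonneg\_nonpos} argument with the factorization made explicit by \isa{algebra\_simps}. The interval-membership side conditions, $u\in[s,t]\Rightarrow u\in[a,b]$ and $a\le u$, are discharged by linear arithmetic from $s,t\in[a,b]$.
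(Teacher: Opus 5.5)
Your proposal is correct and follows essentially the same route as the paper. Both bound $\beta S(u)\le\beta S(a)\le\gamma$ using \isa{S\_nonincreasing}, $\beta>0$, and $\Rinit\le 1$ (with $\gamma>0$ to clear the denominator), combine this with the derived $I(u)\ge 0$ to get $I(u)(\beta S(u)-\gamma)\le 0$, and conclude with the nonincreasing derivative-sign wrapper on $[s,t]\subseteq[a,b]$.
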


\begin{proof}
Since $S$ is nonincreasing, $\beta>0$, and
$\Rinit = \beta S(a)/\gamma \le 1$, we have
$S(t)\le S(a)$ and hence
$\beta S(t) \le \beta S(a) \le \gamma$ for all $t$. Thus
forward invariance gives $I(t)\ge0$, while
$\beta S(t)-\gamma\le0$, so
$I'(t) = I(t)(\beta S(t) - \gamma) \le 0$, and $I$ is nonincreasing.
\end{proof}

This is the key formalized epidemiological consequence connecting the initial effective
threshold to infectious-compartment monotonicity
on $[a,b]$:
when $\Rinit \le 1$, the infection is monotonically nonincreasing on the entire interval.
Together, the growth, decline, non-growth, and interval-wide monotonicity lemmas
make explicit which conclusions are pointwise and which are global on $[a,b]$.

\subsection{Stationary Infection Condition}

\paragraph{Stationary condition.}
The checked source name \isa{peak\_iff} is historical; the paper-level theorem is
only the stationary condition below, not a peak-existence theorem.
\begin{theorem}[Zero-derivative characterization for \(I\); source name \isa{peak\_iff}]
For $t \in [a,b]$ with $I(t) > 0$:
\[
  \beta S(t) I(t) - \gamma I(t) = 0
  \quad\iff\quad
  S(t) = \gamma / \beta
\]
\end{theorem}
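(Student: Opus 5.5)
The statement is purely algebraic at the fixed time $t$; no calculus or flow reasoning is needed beyond what the \isa{SIR\_solution} locale already supplies. I would first factor the left-hand side as
\[
  \beta S(t) I(t) - \gamma I(t) = I(t)\,\bigl(\beta S(t) - \gamma\bigr),
\]
exactly as in the proof of \isa{epidemic\_growth\_iff}. Unlike that theorem, this statement does not need forward invariance: the hypothesis $I(t)>0$ gives $I(t)\neq 0$ directly.

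For the forward direction, assume the product vanishes. Since $I(t)\neq 0$, the zero-product property of the real field gives $\beta S(t)-\gamma=0$, that is, $\beta S(t)=\gamma$. Because $\beta>0$ by the locale assumption \isa{pos\_beta}, I would divide by $\beta$ to obtain $S(t)=\gamma/\beta$. For the converse, substituting $S(t)=\gamma/\beta$ gives $\beta S(t)=\beta\cdot\gamma/\beta=\gamma$, again using $\beta\neq0$. The factor $\beta S(t)-\gamma$ is then zero, so the product is zero.

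In Isabelle I expect the whole argument to reduce to one call to the field simplifier, with \isa{pos\_beta} supplying $\beta\neq0$ and the premise $I(t)>0$ supplying $I(t)\neq0$. For example, I would rewrite with \isa{right\_diff\_distrib} and \isa{mult\_eq\_0\_iff} and then use \isa{field\_simps}. The only mild obstacle is making the nonzero side conditions available to the automation so that it cancels $I(t)$ and $\beta$. If the combined simp call stalls, I would instead prove the two implications separately in Isar, each stated as an explicit cancellation lemma. The membership $t\in[a,b]$ is not used, except to match the interface of the other threshold theorems.
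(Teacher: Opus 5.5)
Your proposal is correct and matches the paper's argument, which is only implicit in the text: factor the expression as $I(t)(\beta S(t)-\gamma)$, cancel the nonzero factor $I(t)$, and use $\beta>0$ to divide. You are also right that this characterization needs neither forward invariance nor the membership $t\in[a,b]$, since the hypothesis $I(t)>0$ already gives $I(t)\neq 0$.
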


This characterizes points where $I'=0$ (necessary condition for an interior
local extremum of $I$). It does not by itself prove that such a point exists or
that it is a global maximum.  Algebraically, while $I>0$, the sign condition for
$I'$ is consistent with a transition from growth to non-growth if $S$ crosses
$\gamma/\beta$ from above; the artifact does not separately prove an existence
or crossing theorem for such a point (\Cref{sec:limitations}).

\subsection{Boundedness}

\begin{theorem}[\isa{S\_bounded}, \isa{I\_bounded}, \isa{R\_bounded}, \isa{compartment\_bounds}]
For all $t \in [a,b]$: $0 \le S(t) \le N$, $0 \le I(t) \le N$, and $0 \le R(t) \le N$,
where $N = S(a) + I(a) + R(a)$ is nonnegative by the initial-value assumptions.
\end{theorem}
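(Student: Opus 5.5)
The bounds follow from forward invariance and conservation alone; no further calculus is needed. First, \Cref{thm:forward-inv} supplies the three lower bounds $S(t)\ge0$, $I(t)\ge0$ and $R(t)\ge0$ for every $t\in[a,b]$. Second, the conservation theorem supplies $S(t)+I(t)+R(t)=N$ on $[a,b]$, with $N=S(a)+I(a)+R(a)$.

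The upper bounds are then pure linear arithmetic. For example,
\[
S(t)=N-I(t)-R(t)\le N,
\]
since $I(t),R(t)\ge0$. The bounds $I(t)\le N$ and $R(t)\le N$ follow symmetrically by dropping the other two nonnegative summands. The side remark that $N\ge0$ is immediate from the locale assumptions \isa{init\_S}, \isa{init\_I} and \isa{init\_R}. It is also implied by the conclusion at $t=a$, although it is not needed as a premise.

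In the Isabelle proof, I would prove \isa{S\_bounded}, \isa{I\_bounded} and \isa{R\_bounded} as separate lemmas. Each lemma fixes $t\in\{a..b\}$, instantiates \isa{S\_nonneg}, \isa{I\_nonneg}, \isa{R\_nonneg} and \isa{conservation} at $t$, and closes with \isa{linarith}. \isa{compartment\_bounds} then conjoins the three lemmas.

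I expect no real obstacle. The only point needing care is that $N$ must be the same term in the conservation theorem and in the bound statement, namely $S(a)+I(a)+R(a)$, so that \isa{linarith} sees syntactically matching atoms. If the locale abbreviates $N$, I would unfold that definition first.
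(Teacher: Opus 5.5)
Your proposal is correct and matches the paper's proof: the lower bounds come from forward invariance, and each upper bound comes from conservation $S(t)+I(t)+R(t)=N$ together with nonnegativity of the other two compartments. As in the paper, $N\ge0$ follows directly from the nonnegative initial values.
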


\begin{proof}
The initial assumptions give $N\ge0$ because it is a sum of nonnegative initial
compartment values.
Lower bounds are the \emph{derived} forward invariance (\Cref{thm:forward-inv}).
For the upper bounds, conservation gives $S(t)+I(t)+R(t)=N$.  Hence
$S(t)\le N$ because $I(t),R(t)\ge0$; $I(t)\le N$ because $S(t),R(t)\ge0$; and
$R(t)\le N$ because $S(t),I(t)\ge0$.
\end{proof}

\section{Related Work}
\label{sec:related-work}

\subsection{ODE Formalization in Isabelle}

Immler and H\"olzl~\cite{immler2012ode,afp-ode} formalized the
Picard-Lindel\"of existence and uniqueness theorem in Isabelle, building a
library for initial value problems; this paper uses the AFP~2024 version of
that library, including its local-flow and continuation infrastructure. This was
later extended to support the
Poincar\'e--Bendixson
theorem~\cite{immler2020poincare}, verified reachability
analysis~\cite{immler2015reachability}, the Lorenz
attractor~\cite{immler2018lorenz}, and other dynamical-systems results.
The Poincar\'e--Bendixson formalization addresses planar limit-set structure;
although SIR trajectories project to a planar $(S,I)$ phase plane, the present
SIR proof instead needs finite-interval sign facts and compact continuation to
obtain global forward existence rather than limit-set classification.

Our work builds directly on this library for the SIR instantiation: we prove
continuous differentiability of the polynomial SIR vector field, obtain the
local flow, and then prove global forward existence for nonnegative initial
data.  The paper does not contribute a new Picard--Lindel\"of theorem or a new
numerical ODE solver.  Its contribution is the domain layer above that
infrastructure: reusable sign-preservation and conservation lemmas, a
locale-based SIR instantiation, and a bridge showing that the qualitative
theorems apply to the unique AFP flow.
This follows a common proof-assistant contribution pattern: classical
mathematics becomes new infrastructure when mechanized with reusable interfaces
and explicit side conditions, as in the Isabelle ODE, reachability, and
dynamical-systems developments cited above.
Compared with those broader dynamics formalizations, this paper is deliberately
narrower in mathematical scope: it is a single epidemiological ODE case study.
The contribution is the end-to-end domain theorem package, reusable scalar
interfaces, and non-circular transfer from the constructed AFP flow to the
qualitative SIR facts.

\subsection{Verified Reachability and Hybrid-System Verification}

Immler~\cite{immler2015reachability} demonstrates verified reachability analysis
of continuous systems in Isabelle using rigorous numerics (affine arithmetic and
adaptive Runge--Kutta methods), yielding certified over-approximations of
reachable sets.

Our approach is \emph{qualitative}: we prove structural properties
(conservation, forward invariance, phase-plane structure, threshold behavior)
without computing trajectories numerically. The two approaches address different
verification goals---numerical prediction versus structural understanding---and
are complementary.
The epidemiological facts themselves are classical; the novelty claimed here is
their checked Isabelle/AFP certification, reusable interface packaging, and
non-circular connection to the constructed flow.

Hybrid-system verification systems such as differential dynamic
logic~\cite{platzer2008dl,platzer2018logical} and KeYmaera
X~\cite{fulton2015keymaerax} also reason about differential equations, including
safety and invariant properties of continuous dynamics via differential-invariant
reasoning.  Those tools target broad hybrid-system proof automation over hybrid
programs and specialized proof calculi, whereas this work develops standard
Isabelle/HOL theorems about a concrete epidemiological ODE and its reusable
scalar proof patterns.
The nonnegative orthant and conservation simplex used for SIR can also be
expressed as differential invariants; our artifact proves them instead through
integrating factors, scalar monotonicity, and AFP flow continuation.  This route
avoids the tangent-cone infrastructure a Nagumo-style
proof~\cite{nagumo1942integralkurven} would require, stays inside standard
HOL-Analysis, and yields an integral representation, not only a sign conclusion.
Tan and Platzer's axiomatic treatment of existence and liveness for
differential equations~\cite{tan2021existence} is especially relevant on the
hybrid-systems side because it isolates subtleties such as finite-time escape.
Its setting is differential dynamic logic proof theory, whereas our contribution
is a concrete Isabelle/HOL/AFP proof that the SIR flow exists globally and
supports the stated epidemiological consequences.
Fulton's web note on disease-spread models in KeYmaera
X~\cite{fulton2020sirkeymaera} uses a simplified SIR model to explore a
hospital-capacity reachability property during the COVID-19 pandemic.  It
further illustrates the hybrid-system safety style, but it is not a
proof-assistant formalization of the Picard--Lindel\"of/global-existence,
phase-plane, and threshold theorem package studied here.
We do not claim that Isabelle/HOL is the shortest way to prove the safety
invariants alone: for example, a hybrid-system prover can express the
nonnegative orthant and conservation simplex as differential invariants.  The
choice of Isabelle/HOL is driven by the desired artifact shape: ordinary HOL
theorems about the AFP Picard--Lindel\"of flow, reusable scalar integral
representations, and theorem names that later Isabelle developments can import
directly.

Other proof assistants provide relevant analysis infrastructure.  Coquelicot
extends Coq's real-analysis library~\cite{boldo2015coquelicot}, and Makarov and
Spitters formalized a constructive Picard algorithm for ODEs in
Coq~\cite{makarov2013picard}.  Park and Thies formalize Taylor models and power
series in Coq for certified exact-real computation of ODE
solutions~\cite{park2024coq}.  HOL Light has formalized substantial
Euclidean-space foundations~\cite{harrison2013hol-light-euclidean} and
metric-space foundations including the Banach fixed-point theorem~\cite{maggesi2018metric};
Lean's mathlib is a broad mathematical library~\cite{mathlib2020}; current
mathlib4 documentation also includes a Picard--Lindel\"of ODE module
\isatt{Mathlib.Analysis.ODE.PicardLindelof}.\footnote{\url{https://leanprover-community.github.io/mathlib4_docs/Mathlib/Analysis/ODE/PicardLindelof.html},
last checked \lastcheckeddate.}  These libraries are adjacent
foundations rather than comparable SIR developments: they provide analysis
infrastructure on which one could build domain-specific compartmental proofs.
The targeted positioning search reported in \Cref{sec:formal-epidemiology} did
not find a SIR or compartmental-model formalization built on the Lean ODE
module.  It also records how we checked for a closer deterministic SIR
proof-assistant artifact.

Outside interactive theorem proving, formal-methods work in computational
biology has developed executable and process-algebraic modeling styles, such as
executable cell biology~\cite{fisher2007executable} and
Bio-PEPA~\cite{ciocchetta2009biopepa}.  Those approaches target simulation,
model checking, or analysis of biological networks rather than HOL proofs of
deterministic ODE theorems, but they help situate this work within a broader
formal-methods ecosystem for biological models.

\subsection{Formal Epidemiology in Other Proof Assistants}
\label{sec:formal-epidemiology}

To the best of our knowledge, after a targeted positioning search of the Archive
of Formal Proofs entry catalogue, Lean/mathlib references, the Coq opam package
index, and public web-search results for the queries below, we are not aware of a published proof-assistant
artifact that combines Picard--Lindel\"of instantiation, non-circular global
existence, forward invariance, phase-plane invariance, and threshold properties
for the deterministic SIR ODE (last checked \lastcheckeddate).\footnote{Representative
search queries included
``SIR epidemic Coq proof assistant'', ``SIR ordinary differential equation
Isabelle'', ``compartmental model Lean theorem prover'', and ``epidemic ordinary
differential equation proof assistant''; returned records were screened for
proof-assistant artifacts addressing deterministic SIR ODE theorem packages.}
This is scoped positioning evidence, not a systematic literature review or a
proof of absence: private, unpublished,
differently indexed, or other-prover developments could exist.  Accordingly, the
contribution claim does not depend on absolute historical priority for all
mechanized SIR reasoning; it is the submitted Isabelle/AFP bridge, reusable
scalar interface, and theorem package that are evaluated here.

H\"olzl~\cite{holzl2012markov} formalized discrete-time Markov chains in
Isabelle (AFP entry).  This is relevant because stochastic epidemic models are
often formulated as Markov chains, but it does not address the deterministic
ODE setting studied here.
The targeted positioning search above did not identify a proof-assistant
stochastic SIR formalization that changes the deterministic-ODE comparison;
stochastic models are outside the present artifact's scope.

\subsection{Positioning}

\Cref{tab:positioning} summarizes what our development reuses from the
Isabelle library and what is formalized in this artifact.  The artifact
organization and size are summarized separately in
\Cref{tab:artifact-architecture}.

\begin{table}[ht]
\centering
\small
\begin{tabular}{@{}>{\raggedright\arraybackslash}p{0.32\linewidth}
                >{\raggedright\arraybackslash}p{0.22\linewidth}
                >{\raggedright\arraybackslash}p{0.34\linewidth}@{}}
\toprule
\textbf{Component} & \textbf{Source} & \textbf{Status} \\
\midrule
FTC, exp, continuous\_on & HOL-Analysis & Reused (HOL-Analysis) \\
Derivative rules & HOL-Analysis & Reused (HOL-Analysis) \\
Locales, type classes & Isabelle/Pure & Reused (Isabelle/Pure) \\
\midrule
Integrating factor lemma & This work & Formalized here (generic) \\
Sign preservation (nonneg/pos) & This work & Formalized here (generic) \\
Monotonicity wrappers & This work & Formalized here (interface) \\
3-compartment conservation & This work & Formalized here (generic) \\
\midrule
SIR Picard--Lindel\"of instantiation & This work + AFP ODE & Formalized here (SIR-specific) \\
Global forward existence & This work & Formalized here (SIR-specific) \\
SIR forward invariance & This work & Formalized here (SIR-specific transfer) \\
Kermack--McKendrick invariant & This work & Formalized here (SIR-specific) \\
$\Rinit$, $\Reff$, and threshold conditions & This work & Formalized here (SIR-specific) \\
\midrule
Picard--Lindel\"of theorem & Immler--H\"olzl (AFP ODE) & Reused (AFP ODE) \\
\bottomrule
\end{tabular}
\caption{Provenance of formal components.  The comparison is based on the
  targeted positioning search in \Cref{sec:related-work}; ``generic'' results are
  reusable for the stated scalar or three-compartment shape, while
  ``SIR-specific'' results depend on the SIR vector field and its flow bridge.}
\label{tab:positioning}
\end{table}

\noindent Within the targeted positioning search above, the distinguishing features of this
development are the combination already summarized in \Cref{sec:introduction}:
AFP Picard--Lindel\"of instantiation for SIR, global forward existence under
nonnegative initial conditions, a reusable scalar framework, the
Kermack--McKendrick invariant and threshold facts, and a locale architecture that
separates generic compartmental reasoning from model-specific instantiation.
The main formal difficulty is connecting scalar qualitative proofs to a globally
defined AFP flow without circularly assuming the forward invariance used to
obtain global existence.

\section{Discussion}
\label{sec:discussion}

\subsection{What Is Assumed vs. Proved}

The scalar locale \isa{SIR\_solution} separates assumptions from consequences;
\Cref{tab:assumed-proved} summarizes the results detailed in \Cref{sec:sir-model}.
\begin{table}[H]
\centering
\small
\begin{tabular}{>{\raggedright\arraybackslash}p{0.42\linewidth}
                >{\raggedright\arraybackslash}p{0.48\linewidth}}
\toprule
\textbf{Assumed} & \textbf{Proved} \\
\midrule
Differentiable trajectories $S,I,R$ satisfying the SIR ODE on $[a,b]$; parameters
$\beta,\gamma>0$; nonnegative initial conditions $S(a),I(a),R(a)\ge 0$.
&
Forward invariance, integral representations for $S$ and $I$, conservation,
monotonicity, boundedness, the phase-plane invariant under $S(a)>0$, and
threshold facts under their stated positivity hypotheses. \\
\bottomrule
\end{tabular}
\caption{Locale-level assumptions and derived qualitative facts.}
\label{tab:assumed-proved}
\end{table}

The locale \isa{SIR\_ODE} discharges the trajectory assumption for the standard
forward initial-value problem by constructing the unique global SIR flow from
positive parameters and nonnegative initial values.

\subsection{The Integrating Factor vs. Picard--Lindel\"of}
\label{sec:discussion-if-vs-pl}

Classical textbook treatments of SIR dynamics often use boundary-repulsion
arguments for forward invariance.  The integrating-factor route used here gives
more than a sign conclusion: on any interval where the scalar ODE is satisfied,
it produces an integral representation for $X(t)$ in terms of $X(a)$ and an
exponential integral, without invoking uniqueness.  This proof is reusable
independently of the Picard--Lindel\"of infrastructure, but it is limited to the
homogeneous-linear shape discussed in \Cref{sec:framework}.  The derivative
endpoint convention behind the scalar framework is stated in \Cref{sec:derivatives}.
The separation keeps scalar sign and conservation lemmas importable without
coupling them to the SIR-specific existence proof.

\subsection{Proof Engineering Observations}

\paragraph{Targeted derivative rules.}
\Cref{sec:framework} records the local proof details for the integrating-factor
lemma.  The broader engineering lesson is that derivative automation was used
selectively: specific product and chain rules were applied explicitly rather
than asking Isabelle to search the full derivative-introduction rule set.

\paragraph{Continuous composition.}
Continuity goals are kept close to their mathematical shape by applying
\isa{continuous\_on\_exp} and \isa{continuous\_on\_ln} directly to already-proved
continuity facts for the inner functions.

\paragraph{The FTC interface.}
The endpoint distinction remains explicit: FTC lemmas first produce derivatives
within $[a,b]$, whereas the scalar locale assumes full-neighbourhood
derivatives.
Together, at the pinned commit, these proof-engineering patterns keep the scalar
\path{theories/SIR/} development's 808 raw theory lines, excluding the
1,021-line existence bridge in \path{work/SIR_Existence.thy} and
\path{work/SIR_Main.thy}, focused on model-specific obligations rather than
repeated calculus-library instantiation boilerplate, so the model-specific layer
remains small enough to audit against the paper's
theorem summaries.

\subsection{Scope Boundaries and Limitations}
\label{sec:limitations}

The following are deliberate scope boundaries rather than gaps in the stated
results; they do not invalidate the results proved within the present
contribution.
\begin{itemize}
\item \emph{Model scope.}  We formalize only the basic closed-population SIR
  model without births, deaths, vital dynamics, vaccination, age structure, or
  spatial heterogeneity.  The checked SIR vector field is autonomous with
  constant parameters; models with seasonal forcing or other explicitly
  time-dependent rates would require a separate nonautonomous instantiation.
\item \emph{Analysis scope.}  We do not formalize equilibria, local or global
  stability, asymptotic convergence, or the final-size relation (implicit
  equation for terminal susceptible fraction).  The artifact does not establish
  existence or uniqueness of an infection peak, a crossing time at which
  $S(t)=\gamma/\beta$, or the full epidemic-curve shape for $\Rinit>1$, such as
  initial growth followed by eventual decline.  The checked theorem
  \isa{peak\_iff} records a stationary condition at such a point when it exists;
  stronger peak or curve-shape results require sign-change, crossing-time,
  terminal-limit, or asymptotic arguments beyond the current artifact.  For the
  present closed-population SIR model without vital dynamics there is no positive
  endemic equilibrium; endemic-equilibrium analysis belongs to model extensions
  with replenishment of susceptibles.
\item \emph{Formalization scope.}  The scalar framework theorems are
  finite-interval results; global forward existence is proved separately in the
  SIR-specific \isa{SIR\_ODE} locale by combining the AFP flow construction with
  conservation, nonnegativity, and compact-continuation infrastructure.  The
  Isabelle development treats compartment values and parameters as real numbers
  and does not formalize physical units or dimensional analysis, including
  dimensional consistency of logarithmic expressions such as the
  Kermack--McKendrick invariant.
\item \emph{Framework generality.}  The present paper gives one full checked
  model instantiation, SIR.  The passing mention of SI reuse after
  \Cref{cor:nonneg} is only a possible target, not a second mechanized model.
  The framework operates on scalar compartmental equations, not on the ODE system
  as a whole; system-level properties such as Jacobian analysis are outside the
  current scope.  The conservation theorem handles exactly three compartments;
  an $n$-compartment generalization would use indexed families and connect
  finite-sum derivative rules and Isabelle's \isa{Big\_Operators}
  infrastructure to the zero-sum derivative hypothesis.  The framework also does
  not handle affine compartment equations (ODEs with additive source terms, such
  as $E' = \beta S I - \sigma E$ in SEIR); as noted in \Cref{sec:framework},
  extending the scalar lemma via variation of parameters is a natural next step.
  Scalability to larger compartment systems is therefore an engineering question
  about the needed indexed interfaces, locale interpretations, and
  model-specific sign and conservation obligations; this artifact does not
  measure proof-search or locale-interpretation scaling for five-or-more
  compartment models.
\item \emph{Computational scope.}  The artifact is theorem infrastructure, not
  an executable epidemiological simulator; it does not use Isabelle code
  generation to compute trajectories.
\end{itemize}

\subsection{Artifact Statement}
\label{sec:artifact}

The complete Isabelle/HOL formalization is maintained at:
\begin{center}
\url{https://github.com/d0d1/verified-compartmental-models}
\end{center}
This is the Isabelle artifact repository; the manuscript sources are maintained
separately, and the commit below identifies the artifact snapshot checked by the
paper.
The results reported here correspond to commit
\begin{center}
\small\texttt{c99b9865041414c8ba9f3601cb335433a4729ce1}.
\end{center}
To reproduce the checked artifact, clone the repository, check out that commit,
install Isabelle~2024 and AFP~2024, and from inside the checked-out artifact
repository build the session with
\begin{center}
\small\texttt{isabelle build -d <AFP>/thys -d . Verified\_Compartmental\_Models}.
\end{center}
Here \texttt{<AFP>} denotes the root directory of the local AFP~2024 checkout.
The repository is currently private; for journal review, the submitted materials
must include the commit as a repository snapshot or make it accessible to
referees.

The artifact repository contains a \path{ROOT} file and a \path{README.md}
recording the Isabelle/AFP dependency versions and build instructions, and the
snapshot can be built with the command above without changing the checked
sources.
The formalization depends on AFP~2024's
\path{Ordinary_Differential_Equations} entry~\cite{afp-ode}.  The artifact is
distributed under the BSD-3-Clause license.  The pinning is intentional: later
Isabelle/AFP releases may require proof-maintenance edits, so the validation
claim is for Isabelle~2024, AFP~2024, and the commit above rather than for all
future tool versions.

The reported validation run used Isabelle~2024 on Ubuntu 24.04.4 LTS (Linux 6.8,
x86-64) on a virtualized 4-vCPU AMD EPYC machine with 7.8 GiB RAM and 15 GiB
swap.  This is the recorded validation environment, not a claimed minimum
requirement.  For the pinned artifact snapshot, the build for the commit above
completed in approximately 34 seconds of wall-clock time once prerequisite
Isabelle/AFP sessions were available, and the checked
Isabelle theory files at that commit contain zero \isa{sorry} proof placeholders
and zero \isa{oops} proof-abort commands in the following source-level audit:
\begin{quote}
\footnotesize
\begin{verbatim}
git grep -nE '\b(sorry|oops)\b' c99b9865041414c8ba9f3601cb335433a4729ce1 -- '*.thy'
\end{verbatim}
\end{quote}
A companion source-level scan of the same pinned theory files found no
declarations using \isa{axiomatization}, \isa{axioms}, or \isa{oracle}:
\begin{quote}
\footnotesize
\begin{verbatim}
git grep -nE '\b(axiomatization|axioms|oracle)\b' \
  c99b9865041414c8ba9f3601cb335433a4729ce1 -- '*.thy'
\end{verbatim}
\end{quote}
The theorem families reported in the manuscript match the selected
standard qualitative SIR facts cited in \Cref{sec:introduction,sec:related-work},
including the Kermack--McKendrick invariant, monotonicity, threshold behavior,
and boundedness.  The trust base is the standard Isabelle/HOL kernel and HOL
axioms together with the imported Isabelle/AFP libraries; this artifact does not
independently verify Isabelle's implementation or the operating-system
toolchain.

For that pinned commit, the source is organized as follows.  The raw theory-line
counts are reproducible \texttt{wc -l} counts, including comments and blanks,
rather than source-lines-of-code counts:
\begin{itemize}
\item \path{theories/Framework/Compartmental_Model.thy}: generic framework
  (220 raw theory lines; 7 named theorem, lemma, or corollary declarations).
\item \path{theories/SIR/}: scalar SIR theories
  (8 theory files; 808 raw theory lines; 45 named theorem, lemma, or corollary
  declarations, excluding definitions).
\item \path{work/SIR_Existence.thy}: Picard--Lindel\"of flow construction and
  global forward bridge, with \path{work/SIR_Main.thy} as the entry point
  (2 theory files; 1,021 raw theory lines; 54 named theorem, lemma, or
  corollary declarations, excluding definitions).
\end{itemize}
Together with the single framework theory, these groups account for all 11
theory files in the checked artifact.
The \path{work/} directory name is historical; these files are committed,
checked session sources for the \isa{Verified\_Compartmental\_Models} build, not
generated scratch files.

\subsection{Using the Theorem Package}
\label{sec:using-theorem-package}

A downstream Isabelle development can import the SIR session and instantiate
\isa{SIR\_ODE} with positive parameters and nonnegative initial values to obtain
the unique global AFP flow.  From that locale, \isa{sir\_solution\_on\_interval}
packages the flow components as a \isa{SIR\_solution} instance on any finite
interval $[0,b]$, making the scalar theorems available for the constructed
trajectory.  For example, a later model proof that needs nonnegative
compartments, conservation, or monotonicity of $I$ under $\Rinit\le1$ can cite
the named SIR theorems instead of reproving those side conditions from the ODE.

\subsection{Future Work}

Some scope boundaries above are natural future-work extensions for the same
architecture.  The list is ordered from proof-engineering extensions nearest to
the present architecture toward larger model and analysis extensions.  The
nearest are the first two items: indexed conservation for $n$-compartment models
and variation of parameters for affine source terms.
\begin{enumerate}
\item \textbf{$n$-compartment conservation}: use finite sums and indexed
  families to support SEIR, SEIRS, and other multi-compartment models.
\item \textbf{Affine compartment equations}: extend the integrating-factor
  library with a variation-of-parameters theorem for equations of the form
  $X' = f(t)X + g(t)$.
\item \textbf{Final size relation}: derive the implicit equation for the
  terminal susceptible fraction using the Kermack--McKendrick invariant.
\item \textbf{Equilibria and stability}: formalize the disease-free equilibrium
  family for closed SIR, and formalize endemic equilibria, local/global
  stability, and asymptotic convergence in vital-dynamics extensions.
\item \textbf{Stochastic extensions}: connect to H\"olzl's Markov chain
  formalization~\cite{holzl2012markov} for comparison between deterministic and stochastic models.
\item \textbf{Numerical verification}: combine with Immler's
  reachability tools for rigorous trajectory bounds.
\end{enumerate}
The final-size and stability items require terminal-limit and asymptotic
arguments beyond the finite-interval scalar layer used in this paper, so they are
larger analysis extensions rather than routine theorem exports.

\section{Conclusion}
\label{sec:conclusion}

This case study gives a reusable route from proof-assistant existence
infrastructure to qualitative compartmental-ODE reasoning.  For SIR, the artifact
connects the selected textbook facts to the unique globally existing forward flow
constructed from AFP's ODE infrastructure.  The reusable layer packages
integrating-factor sign preservation, derivative-sign monotonicity, and
three-compartment conservation; the SIR-specific layer proves global forward
existence by establishing sign and conservation facts on local AFP-flow prefixes
before invoking compact continuation, then transfers the scalar theorems to that
flow.
The result is a checked, importable Isabelle theorem package for finite-interval
qualitative SIR reasoning connected to a globally existing AFP flow.

The formalization makes two textbook elisions explicit: sign and conservation
arguments must be available on existing local prefixes before compact
continuation can globalize the flow, and derivative/FTC interfaces must match
across full-neighbourhood and interval-relative statements.  The practical lesson
is to prove local scalar invariants for the constructed flow before compact
continuation, then transfer the resulting global flow back to reusable theorem
contexts.  The value of the case study is therefore not only the certified SIR
theorem set, but a reusable bridge pattern for future compartmental ODE
formalizations: it removes a layer of informal side-condition assumptions by
turning nonnegativity, conservation, and threshold facts into importable theorems
for the constructed flow.
In one sentence, the contribution is not a new SIR fact, but a transferable
pattern for safely moving from local AFP flows to globally valid compartmental
theorem packages.

\section*{Acknowledgments}
The authors thank the Isabelle/HOL and AFP maintainers for the HOL-Analysis and
ODE infrastructure used by this formalization.

\bibliographystyle{abbrv}
\bibliography{references}

\end{document}